\documentclass[aps,letter,prx,reprint,superscriptaddress,notitlepage,floatfix,nolongbibliography]{revtex4-2}

\usepackage[english]{babel}

\usepackage[letterpaper,top=2cm,bottom=2cm,left=1.75cm,right=1.75cm,marginparwidth=1.75cm]{geometry}
\usepackage{ragged2e}

\usepackage{amsmath}
\usepackage{graphicx}
\usepackage[dvipsnames]{xcolor}
\usepackage[colorlinks=true, allcolors=OliveGreen]{hyperref}
\usepackage{wrapfig}

\usepackage{booktabs}
\usepackage{colortbl}
\usepackage{array}

\usepackage{amsfonts}
\usepackage{bm}
\usepackage{amsthm}
\usepackage{chemformula}

\usepackage[most]{tcolorbox}
\newtcolorbox[auto counter]{ProofBox}[2][]{%
  breakable,
  colback=blue!4,
  colframe=blue!40!black,
  coltitle=white,
  title={Box~\thetcbcounter:\ #2},
  fonttitle=\bfseries\sffamily,
  title after break={},
  #1
}

\usepackage{etoolbox}   

\newcounter{savedfig}
\AtBeginEnvironment{ProofBox}{%
  \setcounter{savedfig}{\value{figure}}%
  \setcounter{figure}{0}%
}
\AtEndEnvironment{ProofBox}{%
  \setcounter{figure}{\value{savedfig}}%
}

\newtcbtheorem[no counter]{Proof}{Box}{
  enhanced,
  sharp corners,
  attach boxed title to top left={
    yshifttext=-1mm
  },
  colback=white,
  colframe=blue!25,
  fonttitle=\bfseries,
  coltitle=black,
  boxed title style={
    sharp corners,
    size=small,
    colback=blue!25,
    colframe=blue!25,
  },
}{prf}

\newtheorem{lemma}{Lemma}

\definecolor{hlgreen}{HTML}{C8E6C9}
\definecolor{darkgreen}{HTML}{4CAF50}
\newcommand{\hl}[1]{\colorbox{hlgreen}{$#1$}}

\begin{document}
\title{Informational blueprints reveal condition-dependent gene regulatory architectures}

\author{Doruk Efe Gökmen}
\affiliation{NSF-Simons National Institute for Theory and Mathematics in Biology, Chicago, IL 60611, USA}
\affiliation{James Franck Institute, The University of Chicago, Chicago, IL 60637, USA}
\author{Rosalind Wenshan Pan}
\affiliation{Division of Biology and Biological Engineering, California Institute of Technology, Pasadena, CA 91125, USA}
\author{Tom Röschinger}
\affiliation{Division of Biology and Biological Engineering, California Institute of Technology, Pasadena, CA 91125, USA}
\author{Stephen Quake}
\affiliation{Department of Bioengineering, Stanford University, Stanford, CA 94305, USA}
\affiliation{Department of Applied Physics, Stanford University, Stanford, CA 94305, USA}
\author{Hernan G Garcia}
\affiliation{Departments of Molecular \& Cell Biology and of Physics, Biophysics Graduate Group,  Institute for Quantitative Biosciences-QB3, University of California, Berkeley, CA, USA; Chan Zuckerberg Biohub--San Francisco, San Francisco, California, USA}
\author{Rob Phillips}
\affiliation{Division of Biology and Biological Engineering, California Institute of Technology, Pasadena, CA 91125, USA}
\affiliation{Department of Physics, California Institute of Technology, Pasadena, CA 91125, USA}
\author{Vincenzo Vitelli}
\email{vitelli@uchicago.edu}
\affiliation{James Franck Institute, The University of Chicago, Chicago, IL 60637, USA}
\affiliation{Leinweber Institute for Theoretical Physics, The University of Chicago, Chicago, IL 60637, USA}
\affiliation{Chan Zuckerberg Biohub--Chicago, Chicago, Illinois, USA}

\begin{abstract}
    {
    While coding regions in the genome have a direct interpretation in terms of protein products, significant fractions are non-coding and yet control essential biological functions. Unlike the genetic code, there is no ``lookup table'' that identifies where regulatory proteins, known as transcription factors (TFs), bind. Here, we extract these binding sites by distilling sequences of nucleotide letters into collective coordinates (hyperletters) representing the binding sites that are active under specific environmental conditions. Going beyond local information footprints between individual bases and expression levels, our \emph{information blueprint} algorithm compresses the global information by optimising filters that simultaneously scan an entire promoter sequence. Inspired by renormalisation-group techniques, we identify TF binding sites as coarse-grained variables combining groups of correlated mutations with the highest collective impact on gene expression. We validate our approach on experimental data for {\it E. coli} and  discover novel regulatory elements illustrating its deployment at scale across growth conditions.  
    }

\end{abstract}
\maketitle

The pioneers of molecular biology mused over the two great polymer languages~\cite{crick_genetic_1966} which life uses to encode information and biological function: nucleic acids and proteins.
While the central dogma~\cite{crick_central_1970} yielded  a dictionary relating the coding regions to the transcribed protein products,
a significant fraction of genomes is non-coding: in part, it serves the regulatory function of adapting the activity of genes to varying cellular environments.

Indeed, one of the grand challenges of the genomic era is to discover the genome-wide architecture of transcriptional regulation~\cite{Keseler2011, Keseler2021EcoCyc,Koo2020}.  That is, to find the constellation of binding sites which control the multitude of genes within a given organism, as well as the transcription factors that bind those sites.  

\begin{figure}[hbt!]
    \centering
    \includegraphics[width=\linewidth]{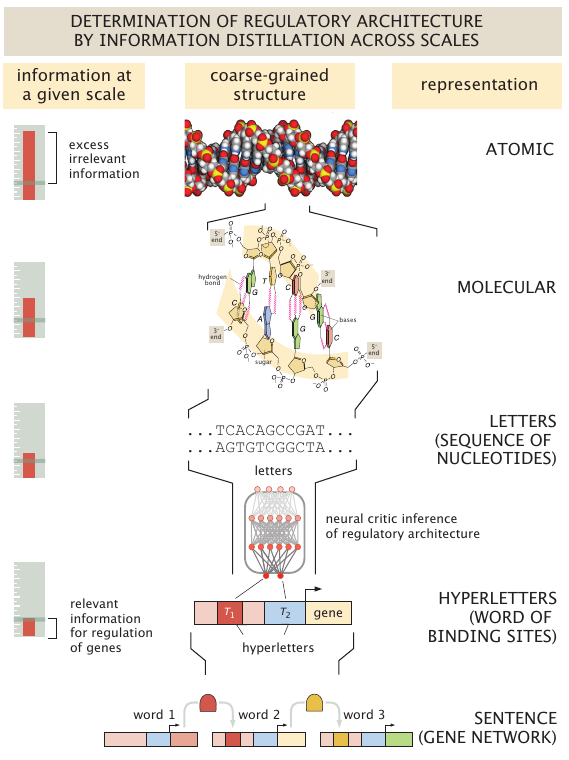}
    \caption{\textbf{From a nucleotide sequence to a constellation of binding sites.}  
    The spatial structure of the DNA double helix can be represented  as a one-dimensional sequence of nucleotide letters (A, T, G, C). This is an act of coarse graining, one that discards the polymer-level physical degrees of freedom while retaining the genetic  information encoded in the base pair ordering. Here we take this  one step further: we construct \emph{hyperletters} from groups of nucleotide letters, each encoding the presence of a transcription factor binding site, and coarse grain the promoter sequence into a short vector $\boldsymbol{T}$ of such hyperletters. The result is a low-dimensional, interpretable representation of the regulatory architecture of a promoter.
    }
    \label{fig:fig1}
\end{figure}

The appropriate protein vocabulary to describe such transcriptional regulation is that of transcription factors (TFs), which are proteins that bind to specific {\it cis}-regulatory elements—mesoscopic sequence motifs in the DNA that control genes.
While the mapping from non-coding DNA to activities of transcribed genes is a complicated process that is not yet understood in quantitative and predictive detail, 
the combinatorial occupancy of TF binding sites is a natural collective variable that determines the transcriptional activity~\cite{spitz_transcription_2012}.
This suggests a coarse-graining approach to model the non-coding genome at the level of binding configurations of regulatory proteins, 
rather than at the nucleotide level—a natural separation of scales between the ~1 bp resolution of sequencing and the typical ~20 bp size of a regulatory protein binding site along a bacterial genome.

A principal tool for attacking this problem is the ``mutate and readout" strategy of massively parallel reporter assays (MPRAs)~\cite{Schneider1989,patwardhan_high-resolution_2009,Kinney2010Final,Sharon2012,Kosuri2013,Urtecho2019, Lagator2022,Belliveau2018}. By systematically mutating a promoter and measuring the resulting gene expression, these assays generate a high-dimensional mapping from nucleotide microstates to expression levels. For example, 
such datasets have been collected across hundreds of different bacterial promoters over tens of different environmental conditions~\cite{Roeschinger2026}. However, while we have a clear scheme for generating this data, we lack a systematic and rigorous conceptual framework for interpreting it.

Efforts to decipher regulatory architecture from sequence have followed several complementary paths. Classical bioinformatic approaches use motif discovery, statistical overrepresentation, comparative genomics, and phylogenetic footprinting to identify candidate regulatory elements~\cite{Lawrence1993,Bailey1994,Stormo2000,bussemaker_building_2000,van_nimwegen_probabilistic_2002,Sinha2002,Wasserman2004,Tompa2005}. These methods have been enormously useful, but they usually begin from collections of co-regulated or homologous sequences and return candidate motifs rather than a direct condition-dependent mapping from promoter sequence to expression. At the other extreme, modern machine-learning models fit high-capacity sequence-to-function maps that can predict chromatin state, transcription-factor binding, or gene expression with impressive accuracy~\cite{Zhou2015DeepSEA,Kelley2018Basenji,Avsec2021BPNet,Avsec2021Enformer,avsec_advancing_2026,hu_multiscale_2025,barbadillamartinez_predicting_2025,Mitra2024,deAlmeida2022,seitz_interpreting_2024}. Although these models are powerful, relating their internal representations to compact mechanistic descriptions of regulation remains an open challenge~\cite{Koo2020,seitz_interpreting_2024}. Here, we seek a middle ground~\cite{Tareen2020,Koo2020,Lally2025}: we approach the complexity of the data gradually by progressively incorporating relevant information, a procedure known in physics contexts as coarse graining. By systematically eliminating irrelevant variables, we distill the sequence into interpretable filters and logic circuits that are flexible enough to capture cooperative and long-range interactions.

To do this, we transform the concept of the information footprint~\cite{Kinney2010Final,Belliveau2018,Ireland2020}---which locally identifies informative bases in isolation---into a global \emph{information blueprint}: a systematic coarse graining principle that identifies how correlated binding sites cooperatively determine regulatory logic. Starting from nucleotide sequences with point mutations, we use an optimal lossy compression scheme~\cite{infbottle1, PhysRevLett.127.240603} to identify lower-dimensional collective coordinates that are relevant for gene expression, as shown schematically in Fig.~\ref{fig:fig1}. The notion of compression here is analogous to moving from the coordinates of $10^{26}$ atoms in a solid to a few collective vibrational modes. Extracting these relevant collective variables accomplishes the essential task of annotating functional binding sites—distinguishing regulatory signal from noise. Moreover, the method also yields insights that go deeper, revealing how a promoter integrates signals from multiple transcription factors, extracting cooperative interactions and non-local effects such as DNA looping.

Our method seeks to find meaning through correlations between sequence and level of expression. As a result, our approach is inherently condition dependent, revealing which binding sites are functionally active under specific environmental conditions. 
Indeed, the same promoter sequence can exhibit distinct regulatory architectures depending on the growth environment, \emph{e.g.}, a repressor site that is invisible under oxidative stress may dominate expression control under glucose. This condition-dependence emerges naturally from our framework, where environmental signals serve as essential inputs to the regulatory logic circuit.

The remainder of this paper is organized as follows. In Section~\ref{sec:the_method}, we define our \emph{information blueprint} principle for identifying binding sites via the optimal lossy compression of genomic information. In Section~\ref{sec:synthetic_validation}, we validate this approach using synthetic datasets, demonstrating that our method correctly recovers known architectures—from simple repression to DNA looping—without prior assumptions. In Section~\ref{sec:experimental}, we apply the framework to experimental MPRA data for \emph{E. coli}, recovering established binding sites in well-characterized promoters across growth conditions and making testable predictions for unannotated genes. Finally, we conclude with an outlook on how this coarse-graining procedure could be iterated to infer genome-wide regulatory networks.

\section{Coarse graining sequences to binding configurations}
\label{sec:the_method}

A typical bacterial promoter like those studied here spans roughly 150 bp, yet only a small fraction of positions—those within TF binding sites—significantly impact expression. Massively parallel reporter assays (MPRAs)~\cite{patwardhan_high-resolution_2009,Kinney2010Final, melnikov_systematic_2012, patwardhan_massively_2012} systematically probe this by measuring expression across libraries of point mutants.
Expression varies widely across an MPRA library of many mutant sequences—some mutations reduce expression, others increase expression and yet others have little effect. 
The challenge is to (i) discover biologically interpretable collective variables encoding binding sites in a gene regulatory architecture and (ii) establish how these sites are correlated. In the next section we present the general theory behind our approach and refer to the Box where a pedagogical example (that may be read in parallel) is worked out step by step.

\onecolumngrid
    
\begin{ProofBox}[label={box:algorithm}]{The information-blueprint method in action}
\justifying
We walk through the simplest application of our method to
a toy promoter as shown in Figure~\ref{fig:FiltersBox} with $N = 5$ positions regulated only by RNA polymerase with a binding site at position $x = 4$, and a toy MPRA library with a $20\%$ mutation rate. While here we provide a basic outline of the implementation of the information blueprint approach in the context of this toy model, all calculations are presented in detail in Appendix~\ref{Appendix:BoxCalculations}. Our dataset consists of five sequences each carrying exactly one mutation. As a result the $i$-th position of the vector $\mathbf{B}^{(m)}$ describing the $m$-th sequence is given by
\begin{equation}
    B_i^{(m)} = \delta_{im}.
    \tag{b1}
\end{equation}
Expression is high ($\mu^{(m)}=$h) for all mutant sequences except $m = 4$, whose mutation coincides with the RNAP binding site ($\mu^{(4)}=$l).
We adopt a simple filter parametrization localised at position $x$: $\Lambda_i(x) = \delta_{ix}$. The hyperletter then associated with the position $x$ for the $m$-th sequence is given by 
\begin{equation}\label{eq:box_1}
    T^{(m)}(x) = \sigma\left(\sum_{i=1}^5 B_i^{(m)} \Lambda_i(x) \right) = \sigma\left(\sum_{i=1}^5 \delta_{im} \delta_{ix}\right) = \sigma(\delta_{mx})=\begin{cases} 1, \quad m=x, \\ 0, \quad m\neq x \end{cases},\tag{b2}
\end{equation} 

\begin{wrapfigure}{r}{0.55\linewidth}
\vspace{-16pt}
\centering
\includegraphics[width=\linewidth]{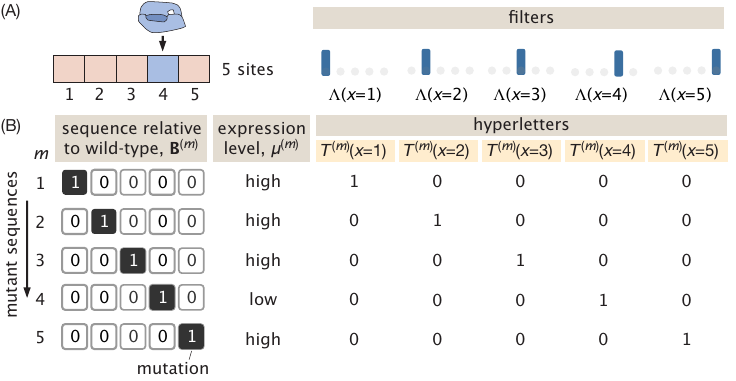}
\vspace{-12pt}
\caption{The information blueprint method for a toy promoter. (A) A five basepair sequence in which the RNAP binds at site four.  The filters are defined by the position at which they are non-zero. (B) Five mutated sequences, their level of expression and the hyperletters that go with each filter for each sequence. }
\label{fig:FiltersBox}
\end{wrapfigure}
\noindent where $\sigma(x)$ is a nonlinear function that imposes a thresholding.
The table shows the value of each hyperletter for each sequence $T^{(m)}$ in the MPRA library.  Each filter flags a different mutant in the set of mutated sequences.

\noindent

We seek to determine which of the possible five hyperletters is most informative for predicting gene expression. That is, we want to find the value of $x$ that maximizes the mutual information between sequence and hyperletter as
\begin{equation}
    I\left(\mu : T(x)\right) = H(\mu) - H\left(\mu | T(x)\right).
    \tag{b3}
\end{equation}
The entropy term $H(\mu)$ can be calculated using the definition from equation~\ref{eq:entropy}. More importantly, this term does not depend on the choice of hyperletter $T(x)$. What depends on $x$ is the conditional entropy $H(\mu | T(x))$. For example, for the $x=4$ hyperletter this conditional entropy is given by
\begin{equation}\label{eq:box_2}
    H(\mu | T(x=4)) \,=\, \underbrace{P(T(x=4)=1)}_{1/5} \cdot \underbrace{H(\mu | T(x=4)=1)}_{0} \,+\, \underbrace{P(T(x=4)=0)}_{4/5} \cdot \underbrace{H(\mu | T(x=4)=\texttt{0})}_{0} \,=\, 0. \tag{b4}
\end{equation}
To see why the entropy terms in the equation vanish, we can, for example, focus on
\begin{equation}
\begin{aligned}
    H(\mu | T(x=4)=1) = - & \left[ P\left(\mu = h | T(x=4)=1\right) \, \log_2 P\left(\mu = h | T(x=4)=1\right) \right. \\
    & \left. + P\left(\mu = l | T(x=4)=1\right) \, \log_2 P\left(\mu = l | T(x=4)=1\right) \right] \\
    & - \left[ 0 \log_2(0) + 1 \log_2(1) \right] = 0.
    \end{aligned}
    \tag{b5}
\end{equation}
As a result, the mutual information between gene expression and the $T(x=4)$ hyperletter is given by
\begin{equation}
    I\left(\mu : T(x=4)\right) = H(\mu) - H\left(\mu | T(x=4)\right) = H(\mu) \approx  0.722.
    \tag{b6}
\end{equation}

\noindent For any $x \neq 4$, by contrast the conditional entropy is given by
\begin{equation}\label{eq:box_3}
    H\left(\mu | T(x \neq 4)\right)\, =\, \underbrace{P\left(T(x \neq 4)=\texttt{1}\right)}_{1/5} \cdot \underbrace{H\left(\mu | T(x \neq 4)=\texttt{1}\right)}_{0}\, + \,\underbrace{P\left(T(x \neq 4\right)=\texttt{0})}_{4/5} \cdot \underbrace{H\left(\mu | T(x \neq 4)=\texttt{0}\right)}_{\tfrac{1}{4}\log 4 + \tfrac{3}{4}\log\tfrac{4}{3}} \,>\, 0. \tag{b7}
\end{equation}
As a result, the mutual information for any $x \neq 4$ is
\begin{equation}
    I\left(\mu : T(x \neq4)\right) = H(\mu) - H\left(\mu | T(x \neq 4)\right) = H(\mu) \approx  0.073.
    \tag{b8}
\end{equation}
The mutual information 
therefore peaks sharply at $x = 4$, saturating the upper bound $H(\mu) \approx 0.722$\,bits, while every other filter position yields only a small residual signal $\approx 0.073$\,bits, as shown in  the left panel of Figure~\ref{fig:box_mi}. 
The binding site emerges as the position where a single bit is most informative of expression.

\begin{wrapfigure}{l}{0.4\linewidth}
    \vspace{-10pt}
    \includegraphics[width=\linewidth]{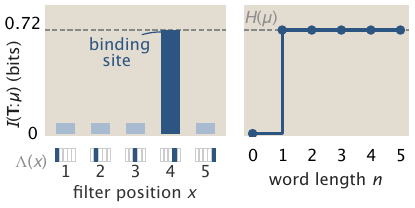}
    \vspace{-15pt}
    \caption{Finding the most informative coarse-graining of the sequence. (A) Scores for the five competing filters.  (B) Mutual information score as a function of word length (i.e. number of hyperletters).}
    \label{fig:box_mi}
\end{wrapfigure}
This single-filter solution also tells us when to stop adding filters. Since $I(T:\mu) \leq H(\mu)$ for any compression $T$, and one filter at the binding site already saturates this information-theoretic bound with $H(\mu |T) = 0$, every additional filter contributes exactly zero information. The curve $I(\mu:T)$ versus word length $n$ therefore plateaus immediately after $n = 1$, with $\Delta I(n \to n{+}1) = 0$ for all $n \geq 1$. This example thus suggests a natural stopping criterion for the word length: we halt when the marginal gain $\Delta I(n \to n{+}1)$ falls below the statistical noise floor, correctly selecting $n = 1$ here.

\noindent As shown in Fig.~\ref{fig:algorithm}, the same logic applies to the full $N=150$~bp constitutive promoter, demonstrating that maximizing $I(T:\mu)$ over all filters thus discovers the binding site without any prior knowledge of its location.
\end{ProofBox}

\twocolumngrid

\subsection{A regulatory alphabet: mapping sequences into hyperletters and words}
\label{sec:words}

\begin{figure*}[hbt!]
    \includegraphics[width=0.85\linewidth]{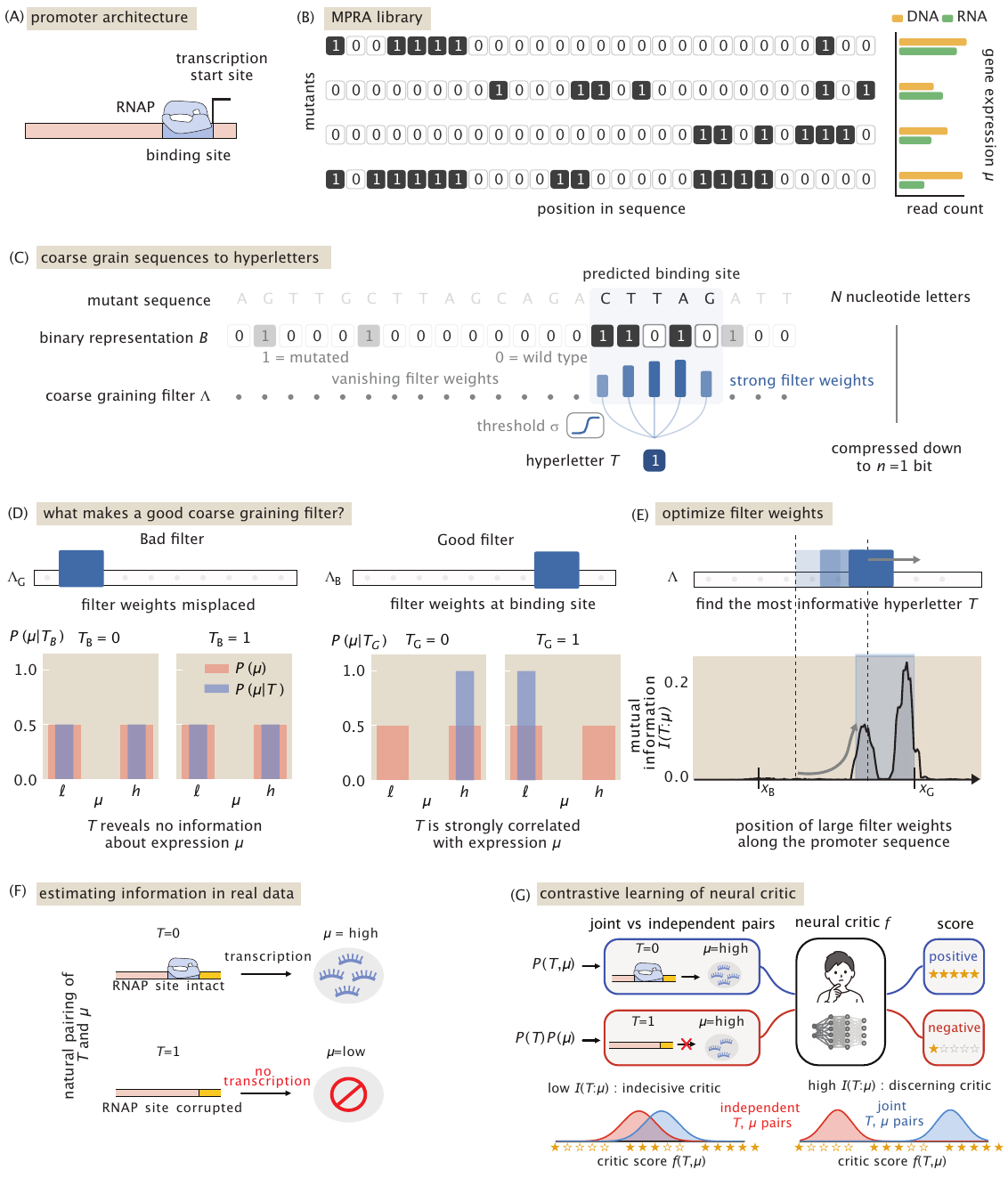}
    \caption{\textbf{Finding the binding site by optimal compression.}
    (A)~Schematic of a constitutive promoter with a known RNAP binding site.
    (B)~A toy MPRA library: each row is a mutant sequence; bar plot gives the DNA and RNA read counts whose ratio defines expression~$\mu$.
    (C)~A linear filter $\bm{\Lambda}$ is used to compress each sequence into a word $T\!\in\!\{\texttt{0},\texttt{1}\}$. Two filters are compared: $\bm{\Lambda}_{\rm G}$ sits on the RNAP site, $\bm{\Lambda}_{\rm B}$ misses it.
    (D)~The table lists expression and both compressed words for every mutant. Sliding a filter across the promoter and computing $I(T:\mu)$ at each position yields a trace that peaks at the RNAP site. Conditional distributions $P(\mu| T)$ obtained by counting from the table. The good filter $T_{\rm G}$ sharply distinguishes high from low expression (blue vs.\ red), yielding large $I(T_{\rm G}:\mu)$; the bad filter $T_{\rm B}$ leaves the conditionals nearly unchanged. Maximizing $I(T:\mu)$ over filters therefore discovers the binding site without prior knowledge.
    (E)~In practice, estimating mutual information by binning and counting becomes unreliable for continuous expression data and large sequence spaces. The solution is to turn estimation into a classification problem based on the underlying biological correlations between word $T$ and expression $\mu$: an intact site ($T\!=\!0$) permits transcription (high $\mu$), a corrupted site ($T\!=\!1$) abolishes it, producing correlated $(T,\mu)$ pairs from the joint $P(T,\mu)$.
    (F)~A neural critic $f$ is trained to distinguish joint pairs from independently shuffled ones drawn from $P(T)P(\mu)$. The gap between the critic's scores on the two sample types provides a lower bound on $I(T:\mu)$: an opinionated critic (right) results in high mutual information \emph{cf.} top histograms in (D), an indecisive one (left) yields a low value, \emph{cf.} bottom pair of histograms in (D).
}\label{fig:algorithm}
\end{figure*}

An MPRA library is built up of $M$ mutant promoters of length $N$, where $M$ is typically on the order of thousands and $N$ on the order of hundreds.
Each of these promoters, in turn, has typically 10\% of its bases mutated.
We create a simplified version of this situation in Box~\ref{box:algorithm} by considering a toy promoter of length $N=5$ where only one site is mutated (20\% mutation rate), hence generating a library that consists of $M=5$ unique mutants.
In general, for each mutant sequence labelled by $m$, \emph{e.g.}, \texttt{ATGT\underline{A}...} where underlined bases differ from the wild type \texttt{ATGTT...}, we construct a binary vector $\bm{B}^{(m)}$ with entries $B_i^{(m)} = 1$ at mutated positions and $0$ otherwise--so that $\bm{B}^{(m)} = [0,0,0,0,1,\cdots]$ as shown in Figure~\ref{fig:algorithm}~(B). 
In Box~\ref{box:algorithm}, $\bf{B}$ is simply a 5-dimensional vector.
The corresponding expression level $\mu^{(m)}$ for each sequence is given by the ratio between RNA and DNA counts for that given promoter variant.
For the toy promoter example in Box~\ref{box:algorithm} (see Fig.~\ref{fig:FiltersBox}), the expression vanishes only when the mutation coincides with the single-base RNAP site, and is 1 otherwise.
The choice of normalizing expression by DNA counts helps to correct for biases in library construction and sequencing, ensuring that the measured expression levels reflect true biological differences rather than technical artifacts.

Just as the nucleotide alphabet $\{\texttt{A},\texttt{C},\texttt{G},\texttt{T}\}$ represents a microscopic coarse-graining of the underlying atomic complexity of nucleotides, our method takes this abstraction one step further.
Each mutant sequence $\bm{B}^{(m)}$ in the library will be assigned a short compressed word $\bm{T}^{(m)}$, which is what we mean by collective coordinates.
Each word will be a vector, whose components denoted by $T_\nu^{(m)}$ can adopt one of two values: 
either \texttt{0} or \texttt{1}. We call these components hyperletters to explicitly distinguish them from standard genomic letters. 
We will use the roman indices $i$ to label individual base pairs, and greek indices $\nu$ to label the collective coordinates.
While at this stage we have not yet established the meaning of the collective coordinates, our goal is to have $\nu$ label the binding sites.
While nucleotide states $B_i^{(m)}$ encode chemical identity, hyperletters $T_\nu^{(m)}$ will encode the coarse-grained functional state of the regulatory architecture (\emph{e.g.}, intact vs disrupted binding sites).

How do we distill such hyperletters from sequences? 
We construct a biologically inspired algorithm consisting of three steps.
First, we make an initial guess (to be optimized iteratively) for the number of putative regulatory elements, which we denote by $n$.  For example, for the constitutive promoter in Box~\ref{box:algorithm}, we make a guess that there will only be one hyperletter corresponding to the RNAP binding site.
The next step mimics how a regulatory protein scans the DNA to recognize a specific binding site. In our algorithm, an $n\times N$ linear filter $\Lambda_{\nu i}$ (playing the role of the protein) is a matrix acting on the sequence vectors $\bm{B}^{(m)}$ in the MPRA library. 
In Box~\ref{box:algorithm}, since we use 1 putative hyperletter, the filter is simply a $1\times5$ matrix.
Algorithmically, the scanning action of $\Lambda_{\nu i}$ consists of tracking when mutations accumulate within a binding site until the sequence becomes unrecognizable and the transcription factor can no longer bind.  
To mimic how the disruption of the protein binding depends on a threshold: we feed the output of the previous step to a nonlinear function $\sigma$ (a sigmoid), which effectively assigns \texttt{0} if the site is still recognisable or \texttt{1} to $T_{\nu}^{(m)}$, when the site is disrupted.
The actual value of this threshold is not known a priori but it is instead learned, and differs across binding sites and promoters depending on how strongly each mutation changes the binding energy.

These three steps can be combined in the following graphical equation, defining the hyperletter for each putative regulatory element $\nu = 1, \ldots, n$:
\begin{equation}
{\scriptstyle
\begin{bmatrix}
    \hl{\Lambda_{1,1}} & \hl{\cdots} & \hl{\Lambda_{1,N}} \\
    \vdots & \ddots & \vdots \\
    \Lambda_{n,1} & \cdots & \Lambda_{n,N}
\end{bmatrix}}
\cdot
{\scriptstyle 
\begin{bmatrix}
    \hl{B_1^{(m)}} \\ \hl{\vdots} \\ \hl{B_N^{(m)}}
\end{bmatrix}}
\;\xrightarrow{\;\sigma(\cdot)\;}
{\scriptstyle
\begin{bmatrix}
    \hl{T_1^{(m)}} \\ \vdots \\ T_n^{(m)}
\end{bmatrix}}
\label{eq:compression}
\end{equation}
or more compactly, 
\begin{align}\label{eq:linear}
    T_\nu^{(m)} = \sigma\left(\sum_{i=1}^N \Lambda_{\nu i} \, B_i^{(m)} \right).
\end{align}
Taken together, the $n$ filters map each full-length sequence to a short word of hyperletters, \emph{e.g.} for $n = 3$, 
\begin{equation*}
    \underbrace{\texttt{A\,C\,G\,T\,T\,\ldots}}_{i = 1, \ldots, N} \;\mapsto\; \underbrace{\texttt{0\,1\,1}}_{\nu = 1, 2, 3} .
\end{equation*}
We will call the collection of filters $\Lambda_{\nu i}$ the \emph{informational blueprint} of the promoter.
The next sections show how filters, constructed by optimally compressing the information the sequences carry about expression, naturally localize on distinct binding sites. Such filters will have significant weights only at positions where mutations disrupt binding of RNAP or TFs.

\subsection{Shannon entropy for MPRA libraries} 
In view of the complexity of typical MPRA libraries, it is useful to 
represent these large data sets probabilistically.
An MPRA library of $M$ mutant sequences can be viewed as a set of independent samples from a joint distribution $P_{\bm{B}, \mu}\left(\bm{B}^{(m)}, \mu^{(m)}\right)$ over sequence and expression.
Drawing a sequence at random from the library, the probability of observing a given expression level is estimated by the fraction of sequences that fall in each expression bin: $P_\mu\left(\mu^{(m)}\right) \approx \text{count}\left(\mu^{(m)}\right) / M$.
The variability of expression across the library is then quantified by the Shannon entropy,
\begin{equation}\label{eq:entropy}
    H(\mu) = - \sum_{m=1}^{M} \, P_\mu\left(\mu^{(m)}\right)\log_2 P_\mu\left(\mu^{(m)}\right),
\end{equation}
which measures, in bits, the baseline uncertainty about the expression level of a randomly drawn sequence---before any information about its sequence is taken into account.
For the simple example in Box~\ref{box:algorithm}, the entropy can be readily calculated by counting the bins 0 (gene off) and 1 (gene on) in the table of Fig.~\ref{fig:FiltersBox}.
More efficient estimation strategies of information theoretic quantities by machine learning avoid binning altogether, see Methods and Fig.~\ref{fig:algorithm}~(F).

Applying Eq.~\ref{eq:linear} to every sequence in the library maps each sequence to a compressed word
and yields a joint distribution $P_{\bm{T}, \mu}\left(\bm{T}^{(m)}, \mu^{(m)}\right)$.
If the filters are well chosen, knowing $\bm{T}^{(m)}$ should sharply reduce the uncertainty about expression---that is, $H(\mu |\bm{T}) \ll H(\mu)$.  Which filters achieve such an optimal compression?

\subsection{Optimally compressed words}
\label{sec:optimalwords}

To assess how well we compress information, let's play the following game.
A sequence $\bm{B}^{(m)}$ is drawn at random from the MPRA library, and you must guess its expression level $\mu^{(m)}$.
If no information about the sequence is revealed, 
the success of your guessing strategy is limited by the uncertainty $H(\mu)$ in Eq.~\ref{eq:entropy}.
Now suppose that, before making this prediction, you are permitted to ask a small number of yes-or-no questions about the sequence (recall it is a bit string). Of course,  you are not allowed to ask the full $N=150$ binary queries it would take to reconstruct every letter in the full sequence, only a small number $n$, where $n\ll N$.
Each yes or no answer, constitutes one hyperletter $T_\nu^{(m)}=\texttt{1}$ or $\texttt{0}$ of the compressed word $\bm{T}^{(m)}$.

The compression challenge is thus to find which queries are most useful.
Most positions have no bearing on expression, thus the queries would be wasted if they are not about a few carefully selected positions.
If your queries are well chosen, a handful of answers suffices. For example, learning that the repressor binding site is intact and the activator site is disrupted already constrains the expected expression level considerably.
The compressed word $\bm{T}$ is the list of answers to these optimally selected queries, and the residual uncertainty after learning $\bm{T}^{(m)}$ is the conditional entropy $H(\mu|\bm{T})$.
The reduction in uncertainty---how much the prediction sharpens upon learning $\bm{T}^{(m)}$---is the mutual information,
\begin{equation}\label{eq:information_blueprint}
    I(\bm{T} : \mu) = \underbrace{H(\mu)}_{\text{initial uncertainty}} - \underbrace{H(\mu |\bm{T})}_{\text{remaining uncertainty}}.
\end{equation}
The concrete task is therefore to find the filter $\Lambda_{\nu i}$ (defined in Eq.~\ref{eq:linear}) that maximizes the mutual information $I$ and hence
satisfies the optimality condition
\begin{equation}\label{eq:objective}
    \delta  I(\bm{T} : \mu) = 0,
\end{equation}
subject to the constraint $n \ll N$.
For the toy example in Box~\ref{box:algorithm}, Eq.~\ref{eq:information_blueprint} can be obtained in a closed form as a function of which query is made, whereby we find that the optimal query is indeed the one that checks if the RNAP site is mutated.

The few hyperletters must therefore account for as much of the variability in expression as possible, as illustrated in Fig.~\ref{fig:algorithm}~(D) by the histograms of the distribution of expression conditioned on a given state of the word, $P(\mu | \bm{T})$. 
Thanks to the biologically informed compression, the hyperletters $T_\nu^{(m)}$ acquire concrete regulatory meaning. In the Box, we carry out this procedure in the simplest of cases where it can actually be done by hand.

In practice, estimating $I(\bm{T}:\mu)$ from finite data by binning and counting, as was done in the Box, becomes unreliable as the dimensionality of $\bm{T}$ grows.
We instead turn to improved estimation techniques using neural networks~\cite{mine, infonce, poole2019variationalboundsmutualinformation} and rooted in large-deviations theory in statistics~\cite{donsker_varadhan_1983}.
The idea is to recast estimation as a classification task: a neural network critic $f$ receives a pair $(\bm{T}, \mu)$ and must judge whether the two come from the natural joint distribution $P(\bm{T},\mu)$---where an intact site ($T\!=\!0$) yields high expression and a corrupted site ($T\!=\!1$) low expression, as shown in Fig.~\ref{fig:algorithm}(E)---or from an independent shuffle $P(\bm{T})P(\mu)$ in which such correlations are destroyed.
When $\bm{T}$ carries regulatory information a discerning critic can reliably tell the two apart, yielding a large score gap; when $\bm{T}$ is uninformative the pairs look identical and the gap vanishes, as illustrated in Fig.~\ref{fig:algorithm}(F).
This score gap is a differentiable lower bound on $I(\bm{T}:\mu)$, which lets us optimise the compression map $\Lambda$ directly by gradient descent~\cite{PhysRevLett.127.240603, PhysRevE.104.064106, gokmen2024compression}, using the mutual information itself as the objective (see Methods for details).

\begin{figure*}[hbt!]
    \centering
    \includegraphics[width=0.85\linewidth]{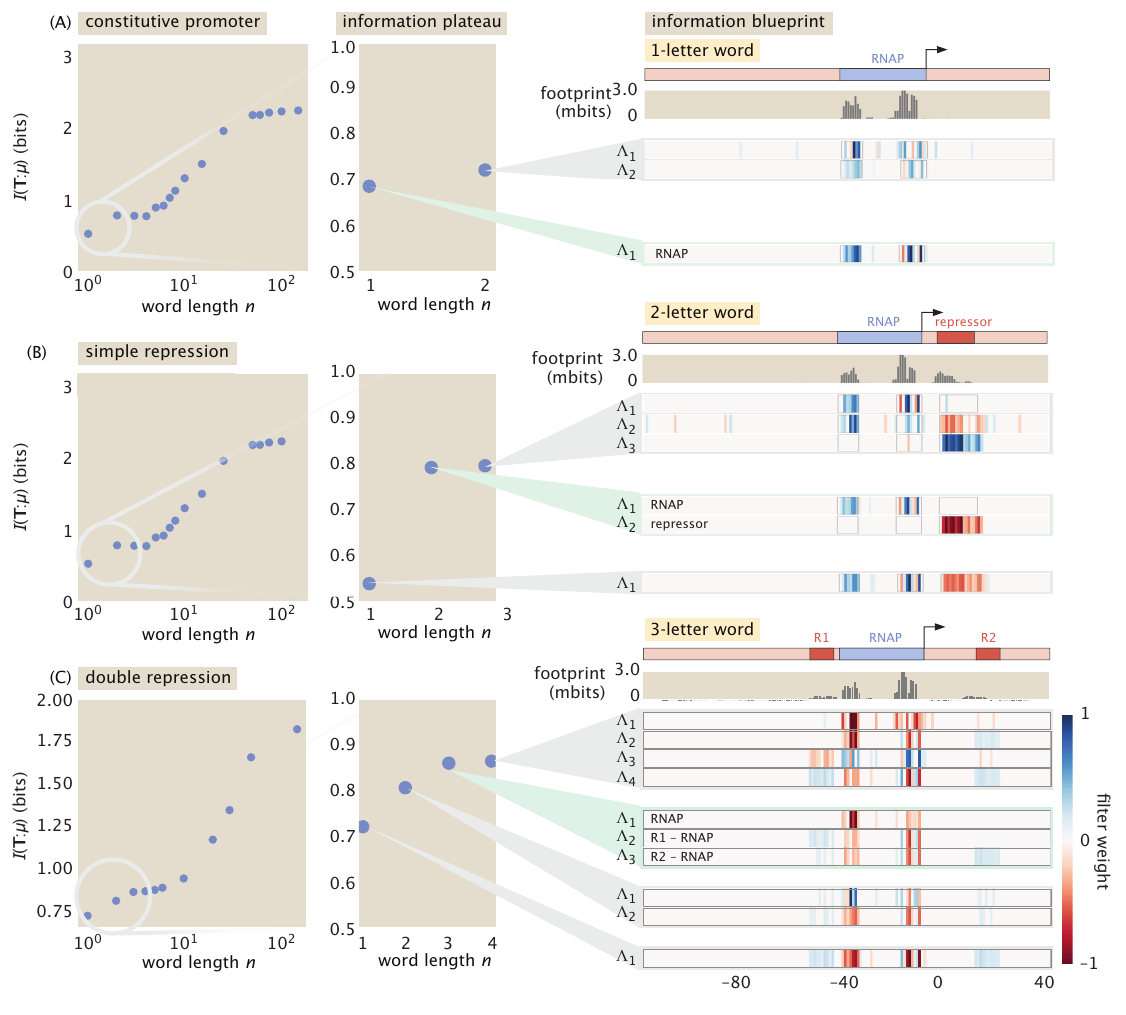}
    \caption{\textbf{Resolving different TF binding sites by tuning the compression rate.}
    Left panels: The mutual information between the sequence and the compressed word is plotted against the word length $n$, defining the rate of compression.
    Here, $n$ corresponds to the number of binary variables $T_\nu$ used to describe the promoter, and is much smaller than the genomic sequence length $N$.
    A small $n$ forces high compression (squeezing features together), while a larger $n$ relaxes the restriction, allowing distinct binding sites to be resolved.
    At small $n$, discrete jumps mark the resolution of each binding site; the curve then reaches a \emph{plateau} once all sites are resolved, whose onset signals the number of regulatory elements. Beyond the plateau, at large enough $n$ approaching the full sequence length, individual base pairs start to be resolved into single-position filters, reflecting a separation of scales between binding-site-level and nucleotide-level structure.
    (A)~Constitutive promoter. A single RNAP site is the only regulatory element. The mutual information reaches a first plateau at $n = 1$, and the single learned filter localizes on the RNAP binding site. Notably, the two domains that belong to the same RNAP site are not split into two separate filters as the regulatory-level information saturates at $n=1$.
    (B)~Simple repression. An RNAP site and a repressor site. At $n = 1$, both sites are conflated into a single filter, with the two regions have couplings with opposite signs reflecting their opposing effects on expression. The information increases upon adding a second hyperletter, and at $n = 2$ the two filters cleanly separate RNAP and repressor. The curve saturates at $n = 2$.
    (C)~Double repression. Two repressor sites flanking an RNAP site. The information plateau is reached at $n = 3$, correctly indicating three distinct regulatory elements. At $n = 3$, each filter localizes on a separate binding site, with both repressor filters showing opposite sign to RNAP.
    }
    \label{fig:wordlength}
\end{figure*}

\subsection{Determining the number of hyperletters and resolution control}

To determine the number of relevant regulatory features, \emph{e.g.}, the number of RNAP and TF binding sites, we systematically increase the word length $n$ until the mutual information $I(\bm{T} : \mu)$ saturates, as shown in Fig.~\ref{fig:wordlength}. As we now proceed to show in detail, this criterion leverages the phase transitions inherent to the information bottleneck~\cite{infbottle1, wu2020phase, parker2022symmetry, gedeon2012mathematical}: discrete jumps in information correspond to the resolution of distinct regulatory elements.

We illustrate this progression through three architectures of increasing complexity. 
As a first testbed, we generate synthetic MPRA libraries from thermodynamic models of transcriptional regulation. 
As explained in more detail in Sec.~\ref{sec:synthetic_validation}, for each promoter architecture, we specify the positions and energetics of the binding sites, draw random point mutations across the promoter, and compute the resulting expression level from the equilibrium occupancy of RNAP~\cite{Pan2024}. This provides ground-truth knowledge of the regulatory elements, against which we can verify that the algorithm recovers the correct number and placement of binding sites.

Figure~\ref{fig:wordlength}~(A) shows the simplest case---a constitutive promoter with only an RNAP binding site. A single hyperletter ($n=1$) already captures the majority of the regulatory information, despite the fact that the RNAP site comprises two spatially separated domains (the $-10$ and $-35$ regions). The two 
domains are not split into separate filters because, at the regulatory level, they function as a single unit. In panel~(B), a repressor site is added alongside RNAP. At $n=1$ the bottleneck is too tight and both sites are conflated into a single filter, with opposite-sign couplings reflecting their opposing effects on expression. Stepping to $n=2$ cleanly separates the RNAP binding site from the repressor binding site. Panel~(C) extends this to double repression, where two independent repressor sites flank RNAP. The information plateau is reached at $n=3$, correctly indicating three distinct regulatory elements, each resolved by its own filter.

In all three cases, the $I(\bm{T}:\mu)$ curve exhibits a discrete staircase of jumps culminating in a stable plateau once all binding sites have been resolved. Beyond this plateau, at large enough $n$ approaching the full sequence length, individual base pairs start to be resolved into their own filters, as expected from the per-position structure of the underlying energy matrix. This plateau therefore reflects a separation of scales between binding-site-level and nucleotide-level structure. For smaller mutagenesis libraries there is an apparent rise at large $n>10$, which is a finite-sample bias of the neural critic estimator of mutual information and is suppressed as the library grows.

We stop at the plateau to precisely target the coarse-grained genomic structure at the scale of binding sites (typically of order 20\,bp), whose collective state determines the bulk of expression---i.e., whether the gene is switched on or off.  This is a choice that stems from our own scientific goals. Pushing $n$ further would resolve individual base pairs and capture fine quantitative variation in expression, but it would also inflate the dimensionality of the representation without adding the regulatory-level structure we seek. The staircase therefore provides a principled way to perform dimensionality reduction of genomic data and to count the number of regulatory elements.

\subsection{\label{sec:bio_prior}Incorporating biological priors for enhanced accuracy and interpretability}

While we have seen in Fig.~\ref{fig:wordlength} that the linear filters $\Lambda_{\nu i}$ successfully localize on binding sites in synthetic MPRA datasets, we can improve robustness and interpretability by incorporating biological priors as inductive biases. 
For microbial organisms, TF binding sites are typically 15--25\,bp long (and typically shorter for mammalian cells)---a length scale set by the physical size of DNA-binding domains. Rather than rediscovering this from data, we can build it into the filter parameterization.

A natural first approach is to constrain each filter to a Gaussian envelope with a learnable center, width, and scalar amplitude, as illustrated in the Methods Fig.~\ref{fig:sliding_filters}. 
This drastically reduces the number of free parameters---from $N$ (the length of the sequence) weights per filter to just three---which proves essential when working with noisy experimental data where unconstrained filters are heavily overparameterized relative to the information signal. However, scalar-amplitude Gaussians impose a strong assumption: they enforce uniform weight within each binding site, preventing the filter from learning position-dependent features such as the internal structure of a TF--DNA contact.

To retain the regularizing benefit of localization while preserving the expressiveness of a freely optimized filter, we introduce an \emph{envelope parameterization}. 
Each filter $\Lambda_{\nu i}$ is written as the element-wise product of an envelope function and unconstrained weights:
\begin{equation}\label{eq:envelope_filter}
    \Lambda_{\nu i} = \mathcal{W}_{\nu i}\, \lambda_{\nu i}\,,
\end{equation}
where $\mathcal{W}_{\nu i} \in [0,1]$ is a smooth envelope that localizes the filter to a contiguous region of the sequence, and $\lambda_{\nu i}$ are freely optimized weights at each sequence position $i$. The envelope softly suppresses the filter outside a window of biologically plausible width, while the weights within that region are free to capture the full positional structure of the binding interaction. A natural choice is a Gaussian envelope,
\begin{equation}\label{eq:envelope_gaussian}
    \mathcal{W}_{\nu i}^{\rm gaussian} = \exp\!\left[-\frac{(i - c_\nu)^2}{2w_\nu^2}\right]\,,
\end{equation}
with learnable center $c_\nu$ and width $w_\nu$. Alternatively, one can use a soft rectangular window constructed from a product of sigmoids,
\begin{equation}\label{eq:envelope_rect}
    \mathcal{W}_{\nu i}^{\rm rectangle}(\tau) = \sigma_\tau\!\left(i - c_\nu + \tfrac{w_\nu}{2}\right)\,\sigma_\tau\!\left(c_\nu + \tfrac{w_\nu}{2} - i\right)\,,
\end{equation}
where $\sigma_\tau(x) = (1 + e^{-x/\tau})^{-1}$ and $\tau$ controls the edge sharpness. The Gaussian envelope provides smoother localization, while the sigmoid product produces a flatter passband that weights all positions within the window more uniformly. In both cases, the width $w_\nu$ is biased to be $\sim 15-25$\,bp to reflect the typical binding site length but it is in principle learnable, allowing the model to adapt to sites of different sizes.

Since each filter carries a single envelope, the number of filters $n$  corresponds to the number of putative binding sites, making the model output interpretable: each filter reports one site's location (via $c_\nu$), width (via $w_\nu$), and detailed contact structure (via $\lambda_{\nu i}$). 
The envelope ansatz recovers more mutual information than the pure Gaussian while remaining sparse, and the learned envelope parameters can be directly read off as binding site positions and widths. 
 Moreover, the resulting optimal envelopes for each hyperletter $T_\nu$ can be used to directly filter out the underlying sequence of the $\nu$th predicted binding site at a base-pair resolution.

Since promoter sequences are discrete, we optimize the envelope centers $c_\nu$ using a discrete update rule rather than standard gradient descent. At each step, we compute the sign of the gradient $\partial I / \partial c_\nu$ and accumulate these signs over a window of training steps. The center is then moved by one base pair in the majority direction only if the accumulated vote exceeds a threshold, ensuring that centers lock onto optimal positions rather than jittering due to gradient noise.

A few practical considerations are worth noting.
The resolution of the blueprint—how many distinct regulatory elements can be resolved—depends on the empirical data distribution, which here is determined by the sampling depth of the mutation library.
Estimating $I(\bm{T}:\mu)$ reliably requires sufficient sequence coverage; sparse or biased libraries will limit the complexity of architectures that can be recovered.
Moreover, in this work, we treat each promoter in isolation to validate the conceptual foundations, though the coarse-graining framework naturally extends to genome-wide analysis where shared filter weights could reveal regulatory motifs common to multiple genes.
To establish the method's capabilities, we first evaluate it under ideal conditions—synthetic data with known ground truth and sufficient sampling—before turning to experimental data where these assumptions may not fully hold.

\begin{figure*}[hbt!]
    \centering
    \includegraphics[]{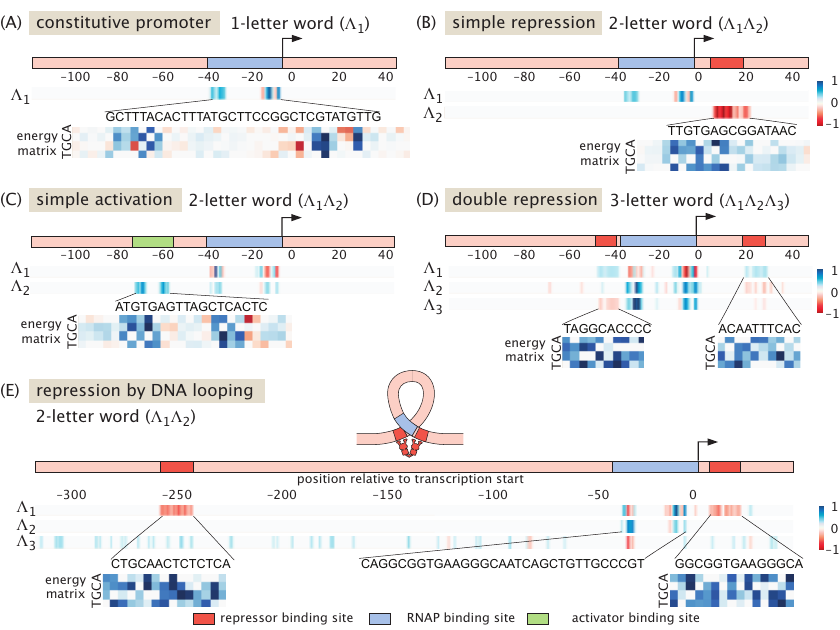}
    \caption{\textbf{Extracting binding sites of synthetic promoters.} 
    Each panel shows a different regulatory architecture. The top bars depict the ground truth locations of binding sites, while the rows below show the optimized compression filters $\Lambda_\nu$. The color map indicates the weight values: blue represents positive weights and red represents negative weights. The relative signs encode the regulatory logic: repressors (red) appear with weights opposite to RNAP (blue), whereas activators (blue) share the same sign. The experimentally determined energy matrices for each binding site are shown alongside for comparison.
    (A) Constitutive promoter. A single filter $\Lambda_1$ recovers the bipartite RNAP binding site.
    (B) Simple repression. Two filters resolve a distinct binding site each: $\Lambda_1$ captures RNAP (positive) and $\Lambda_2$ captures the repressor (negative).
    (C) Simple activation. $\Lambda_2$ identifies the activator site. Its positive sign (concordant with RNAP) correctly identifies the interaction as activation.
    (D) Double repression. With $n=3$, two independent repressor sites are separately coupled by two filters ($\Lambda_2, \Lambda_3$) alongside RNAP.
    (E) Repression by DNA looping. A single filter component ($\Lambda_2$) captures both distal operator sites simultaneously. This reflects the strong mechanical cooperativity due to a loop formation in the polymer, where the two sites function as a single regulatory unit.
    }
    \label{fig:synthetic} 
\end{figure*}

\section{Advantages and Validation of Information Blueprints 
}\label{sec:synthetic_validation}

In this section, we validate our information-blueprint method and compare to alternative approaches such as those based on the information-footprint. 

It is very important that we have some gold standard case studies that allow us to test our systematic coarse-graining approach.  Perhaps the most reliable approach is to first construct synthetic MPRA datasets for which the ground truth is known by using well established thermodynamic models of transcription. In these models, the level of expression 
is proportional to the probability that the promoter of interest is occupied by polymerase~\cite{Shea1985, Buchler2003a, Vilar2003b, bintu2005transcriptional, bintu2005transcriptional2, Sherman2012}.  Initial models of this kind attributed a binding energy for RNAP and transcription factors as a single, sequence-independent parameter. However, in additive models of the binding energy using an ``energy matrix’’ there is a `5 `       34 direct mapping between sequence and binding energy~\cite{Kinney2010Final,Brewster2012,Barnes2019,Pan2024} which allows us to compute the binding energy for any promoter sequence. Using such models~\cite{Pan2024}, we generate synthetic MPRA data for constitutive promoters, the simple repression motif, double repression, simple activation and the case of DNA looping.
The informational blueprints $\Lambda$ are optimized on these synthetic sequence-expression pairs for which we know the ground truth and hence we can verify that the known binding sites are recovered as shown in
Figure~\ref{fig:synthetic}.

With these synthetic datasets in hand, we proceed to consider common regulatory architectures~\cite{Pan2024} and concretely demonstrate four capabilities of informational blueprints: (a) grouping correlated nucleotide positions at distant positions into functional binding sites; (b) inferring the regulatory sign of the transcription factor associated with each binding site;
(c) inferring the number of distinct regulatory elements in a promoter under certain environmental conditions,
even if the binding sites are overlapping;
(d) identifying the logic gates that determine how multiple transcription factors regulate a gene.

\subsection{Grouping correlated positions into binding sites}
A key feature of informational blueprints is their ability to detect and group correlated nucleotide positions into functional binding sites, even when those positions are dispersed across the sequence. This follows from the fact that the blueprints are obtained by optimisation of the \textit{global} information content rather than by scanning individual positions.

To a physicist's eye, there is a useful analogy with statistical mechanics: 
information footprints are essentially mean-field descriptions in the sense that they attempt to reduce the promoter into an effective single-site contributions $I(B_i:\mu)$. This is analogous to representing a magnetic bar comprised of many interacting spins, each oriented either $\uparrow$ (here, hyperletter $T=\texttt{1}$) or $\downarrow$ ($T=\texttt{0}$),  as a single spin subject to a mean field that attempts to capture its interaction with all other spins.
By contrast, information blueprints explicitly take into account many-site correlations and mimic tools like the renormalization group designed to capture cooperative behaviour in magnets that arises near phase transitions and cannot be captured by mean-field approaches.

The algorithmic ability to group many correlated base-pairs into a single binding site follows from the sparsity of the weights in the information blueprint. For a given hyperletter $\nu$, the weights $\Lambda_{\nu i}$ end up being approximately zero everywhere except at the positions $i$ that are inside a specific binding site, as shown in Figures~\ref{fig:wordlength} and \ref{fig:synthetic}.
The optimization drives the entire weight vector $\bm{\Lambda}_\nu = (\Lambda_{\nu 1}, \dots, \Lambda_{\nu N})$ for each filter to sparsely concentrate on specific promoter regions.

This feature can be demonstrated already on the simplest case with a synthetic promoter with only an RNAP binding site and no TFs, as shown in Fig.~\ref{fig:synthetic}(A).
In this case, only one hyperletter encoding a single filter ($n=1$) suffices: the optimized filter $\Lambda$ localizes on the RNAP binding site, with the characteristic bipartite structure reflecting the $-10$ and $-35$ boxes.
Positions outside the binding site receive negligible weight (smaller than 10\% of the max value), confirming that the information bottleneck forces the compression to focus on regulatory elements.
This also shows that, without providing any prior biological knowledge, blueprints discover the typical size of regulatory proteins.
Adding more filters ($n>1$) gains no information by separating the two domains of the RNAP binding, as shown in Fig.~\ref{fig:wordlength}(A). This observation confirms that the delocalised positions are indeed associated with a \emph{single} binding site.

\subsection{Inferring regulatory sign}
The blueprint weights $\Lambda_{\nu i}$ naturally carry a sign, encoding the direction of the regulatory interaction directly within the compression---activators and repressors are distinguished automatically, without supplementary statistics.

Although the overall sign of any single filter is arbitrary (since the flipping of $\bm{\Lambda}_\nu \to -\bm{\Lambda}_\nu$ and $T_\nu \to \texttt{NOT}(T_\nu)$ leaves the mutual information unchanged), the \emph{relative} signs between different weights have biological meaning.
Regulatory elements that cooperate with RNAP, \emph{e.g.} activators, share the same sign, since mutations in either site disrupt expression.
In contrast, repressors appear with the \emph{opposite} relative sign: a mutation that breaks a repressor site lifts inhibition and increases expression, opposing the effect of breaking the RNAP site.

To demonstrate this feature, we consider regulatory architectures with a single TF binding site in addition to RNAP, as shown in Figs.~\ref{fig:synthetic}(B) and (C). With $n=2$ hyperletters, the compression separates RNAP from the TF site, assigning each to a distinct filter component. The sign of the weights distinguishes regulatory function: for repression, the TF filter has the opposite sign to the weights at the RNAP site, reflecting that mutations in the repressor site increase expression while mutations in the RNAP site decrease it. This is in contrast with a promoter with an activator site, shown in Fig.~\ref{fig:synthetic}(C), where both filters share the same sign—disrupting either site reduces expression. This sign structure is not imposed but emerges from the optimization.

We note that this is a feature absent from information footprints, which yield only a positive number $I(b_i:\mu)$ at each position and therefore require supplementary statistics to determine the sign of the regulatory effect \footnote{This is analogous to mean-field treatments of spin systems, where only the absolute value of the magnetisation is accessible, losing information about whether the magnetization is oriented $\uparrow$ or $\downarrow$.}.

\subsection{Counting the number of binding sites}
Different information blueprint filters $\bm{\Lambda}_\nu$ tend to localize on distinct regulatory elements, providing a direct and systematic way to count the number of binding sites in a promoter. This counting procedure was explicitly illustrated in Fig.~\ref{fig:wordlength} and validated with synthetic data. In Fig.~\ref{fig:synthetic}(A) we consider the case of a single RNAP site (one-hyperletter word, $n=1$) and in Figs.~\ref{fig:synthetic}(B) and (C) the case of RNAP coupled to a repressor or activator site (two hyperletter words, $n=2$).

In particular, the blueprint approach can delineate binding sites even when they spatially overlap.
Appendix Fig.~\ref{fig:overlap} shows a synthetic promoter where the RNAP and repressor sites share common positions.
At $n = 1$, the compression bottleneck forces both regulatory elements into a single conflated filter.
Increasing to $n = 2$ resolves them into two distinct components, each localizing on its respective binding site.

The counting by informational blueprints works even for more complex architectures that require additional hyperletters to resolve distinct regulatory elements.
For example, in Fig.~\ref{fig:synthetic}(D), a promoter with two repressor sites with independent operators, each repressor binds its own site and acts autonomously—either one can silence the gene.
Here, three filters ($n=3$) cleanly separate all three binding sites: RNAP and the two repressors, with both repressor filters showing opposite sign to RNAP.
The independence of the two repressors is reflected in the filter structure: each occupies its own filter.

DNA looping, shown in Fig.~\ref{fig:synthetic}(E) presents a particularly interesting case where there is an interplay between the two tasks of detecting non-local correlations and counting binding sites. Consider, two repressor proteins that form a tetramer and bind two spatially separated operator sites, which forces the intervening DNA to bend into a loop. The two operators do not function independently---they are mechanistically coupled, and both must be occupied for the looped configuration to form and block transcription.
Intuitively, the blueprints combine both distant operators within a \emph{single} filter, reflecting their cooperative function as one non-local regulatory unit. This illustrates a key strength of the global compression scheme: it automatically detects that two distant sequence regions function together as a single regulatory element.

There is an analogue of this automated counting in the physics of phase transitions: it is the automated identificat
ion of the symmetries of the order parameter. More precisely $2^n$ would correspond to the number of elements possessed by the generator of the discrete symmetry group representing the order parameter, e.g., $n=3$ could represent a ($2^3=8$)-fold rotation because each of the $n=3$ letters can attain two values \texttt{1} or \texttt{0}.  

\subsection{Inferring logic gates of transcription factors}

One of the biggest challenges to understanding regulatory architectures is to infer how multiple TF binding sites cooperate to control genes. This is typically attempted by performing (in the lab or in silico) \textit{multiple} experiments sampling a range of copy numbers of TF. By contrast, the information blueprint can reveal cooperative effects involving multiple TF binding sites with a single experiment at a specific TF copy number.

Appendix Fig.~\ref{fig:logic} shows how the blueprints reveal the regulatory logic governing how multiple TFs are integrated, using synthetic data. We consider double repression, where two repressors R$_1$ and R$_2$ control expression. Under \texttt{AND} logic, repression requires both repressors to bind; under \texttt{OR} logic, either suffices. Since our filters couple to mutations that disrupt binding, they effectively compute functions of the variables $r_i$, which is $\texttt{1}$ when the binding site is disrupted and $\texttt{0}$ otherwise.

\begin{figure*}[hbt!]
    \centering
    \includegraphics[width=0.85\textwidth]{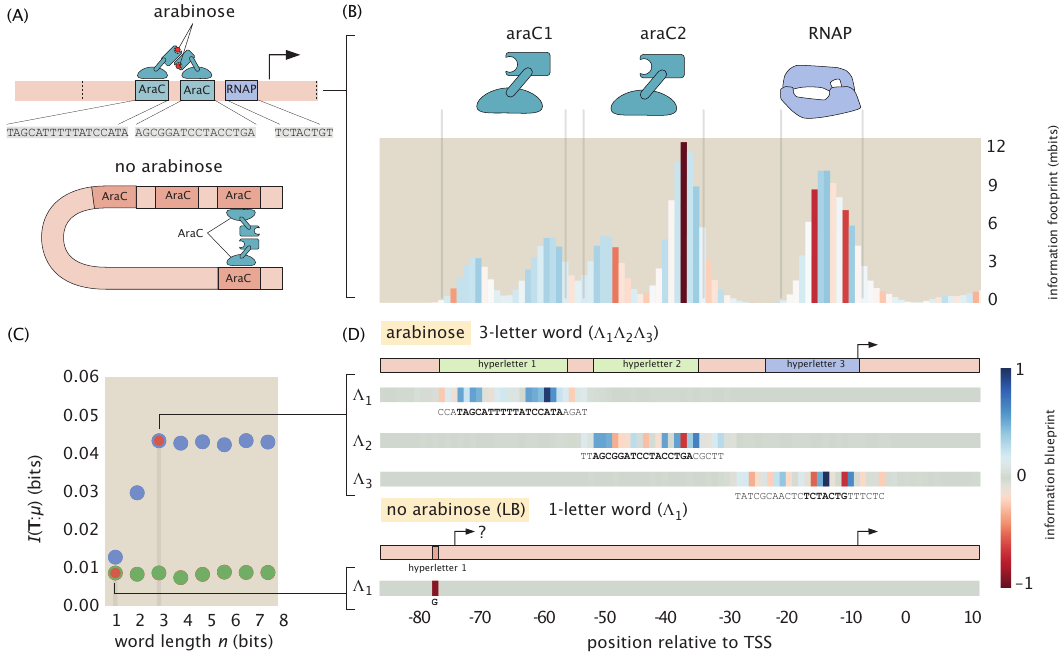}
    \caption{{\bf Recovering regulatory architecture for arabinose operon from MPRA data.} 
    (A)~Schematic of the \emph{araBAD} promoter architecture. When arabinose is present, AraC binds two sites.
    (B)~Information footprint showing mutual information $I(b_i:\mu)$ at each position along the promoter in the presence of arabinose, showing at least five peaks.
    (C)~Compressed information $I(\bm{T};\mu)$ versus word length~$n$ plateaus at $n=3$, correctly predicting three binding sites. 
    In the absence of arabinose (red), the information remains small at all~$n$, consistent with the loss of AraC-mediated activation.
    (D)~In the presence of arabinose, the three optimised filters localise on the known binding sites, and the extracted subsequences match the established binding motifs~\cite{schleif_regulation_2000}. Notably, in the absence of arabinose, a single point mutation in this background can restore non-zero information, indicating the creation of a new transcription start site.
    }
    \label{fig:arabinose}
\end{figure*}

For \texttt{AND}-repression, expression is high whenever at least one site is disrupted ($r_1$ \texttt{OR} $r_2 = \texttt{1}$)—a function computable by a single linear filter $T_0 = r_1 + r_2$. Indeed, we find that one filter suffices, with weights coupling to both repressor sites with the same sign. For OR-repression, expression is high only when \emph{both} sites are disrupted ($r_1$ \texttt{AND} $r_2 = \texttt{1}$)—an \texttt{AND} gate that no single linear filter can compute. Accordingly, the optimization requires two filters, each coupling to one repressor site independently. This two-hyperletter representation uses more states than strictly necessary, but succeeds because it distinguishes all functionally distinct configurations. The number of filters and their weight structure thus encode the regulatory logic: shared weights indicate cooperative regulation, while separate filters indicate independent action.

The ability of informational blueprints to capture cooperative effects involving interacting transcription factors stems directly from its physics-inspired origin. 
Renormalization group approaches in physics can construct, upon coarse-graining (e.g. spins decimation), effective descriptions of complex systems (e.g. a magnet) at each level of a hierarchy of spatial scales. 
Similarly, our information blueprint method can construct,  upon coarse graining the sequence at nucleotide resolution, an effective logic circuit of binding sites at the level of regulatory proteins and potentially even at the level of groups of genes, if the coarse-graining of genomic information could be iterated.

\subsection{Robustness to noise}
In practice, MPRA measurements are affected by experimental noise: stochastic variation in DNA and RNA counts, barcode sampling, and other sources of technical variability.
To assess the robustness of our method, we generated synthetic simple-repression libraries with increasing levels of measurement noise applied to the RNA/DNA ratio.
The SI Figure~\ref{fig:noise_benchmark} summarises the results: the left panels show the captured mutual information $I(\bm{T};\mu)$ as a function of word length $n$ at each noise level, while the right panels display the corresponding learned filters at $n=2$.
As noise increases, the mutual information decreases and the signal to noise ratio degrades, making it more difficult to resolve distinct binding sites. In the free parameterization, the filters become increasingly diffuse and noisy as illustrated in the top row of the SI Figure~\ref{fig:noise_benchmark}, and the information plateau becomes less pronounced, indicating that
additional regulatory elements can no longer be clearly discerned in separate filters.
Crucially, the biologically informed envelope parameterisation (Section~\ref{sec:bio_prior}) extends the resolvable regime: by constraining each filter to a physically plausible width and position, the number of free parameters is drastically reduced, allowing the optimisation to distinguish genuine binding-site signals from noise even when unconstrained filters fail to do so, as shown in the bottom row of the SI Figure~\ref{fig:noise_benchmark}.

\section{Predicting Experimental Binding Sites on \emph{E. coli} Promoters}\label{sec:experimental}

The central goal of our analysis is to use our systematic approach in the context of MPRA experiments that allow us to rigorously and reliably determine the regulatory architectures for all promoters across a genome in a suite of different environmental conditions.  
Unlike bioinformatic approaches that postulate putative binding sites as overrepresented motifs by considering the sequences alone~\cite{bussemaker_building_2000}, our method identifies 
functionally active regulatory elements by quantifying the relevance of mutations for expression---making the prediction 
inherently condition-specific.
To illustrate our progress towards this ambitious goal, we now deploy our method at scale on experimental data using the same pipeline we illustrated with synthetic data.  We begin with the gold standard arabinose operon for which the experimental ground truth is well known (III A). We then deploy informational footprints at scale for the \textit{tisB} promoter across 40 growth conditions (III B). Lastly, we illustrate how to use our approach to discover regulatory architectures in {\it E. coli}. including the case of promoters that have not been previously annotated.

\begin{figure*}[htb!]
    \centering
    \includegraphics[width=0.95\textwidth]{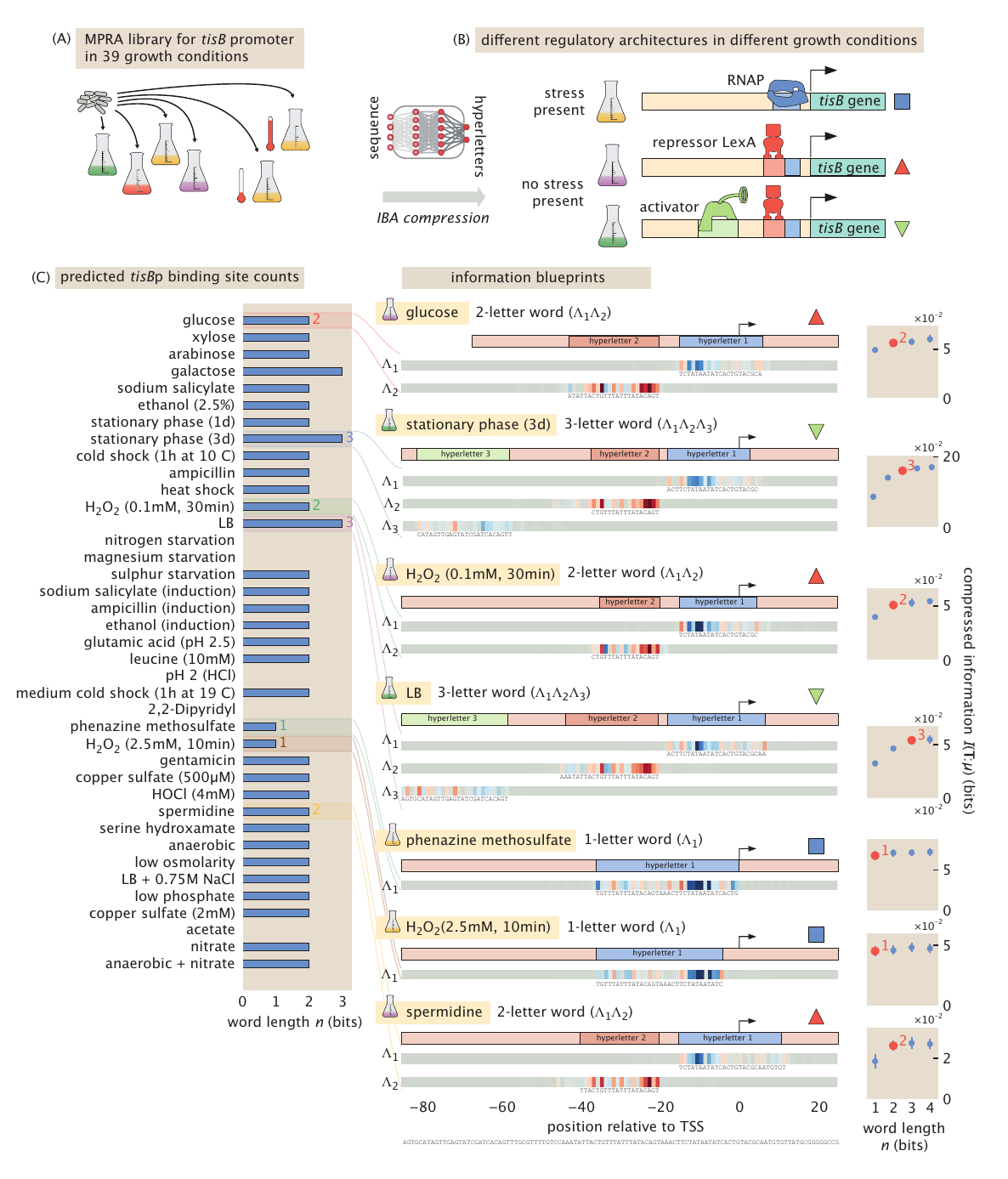}
    \caption{{\bf Deploying the method at scale on MPRA data for the \emph{tisB} promoter across 39 growth conditions.}
(A)~The \emph{tisB} promoter library is assayed by MPRA under 40 distinct growth conditions, each yielding a condition-specific expression profile.
(B)~Schematic illustrating that different growth conditions can activate different regulatory architectures on the same promoter, engaging distinct combinations of binding sites.
(C)~Predicted number of binding sites (word length $n$) for each growth condition. Colored labels mark seven conditions representing the 3 possible regulatory architectures.
(D)~Information blueprints for the representative conditions for $n=1,2,3$ putative binding sites. For each condition, the optimized filter components $\Lambda_i$ are displayed as heatmaps over the promoter sequence, with the corresponding wild-type subsequences shown above. The right column shows the compressed mutual information $I(\boldsymbol{T};\mu)$ as a function of word length~$n$. The red dot indicates the predicted number of binding sites, at which the information gain due to additional hyperletters is less than 10\%. Conditions are ordered from top to bottom following the ranking in~(C), revealing a spectrum from simple single-site regulation (e.g.\ glucose) to multi-site architectures involving up to three predicted binding sites (e.g.\ stationary phase).
}
    \label{fig:deploying_at_scale} 
\end{figure*}

\subsection{Benchmarking on the arabinose operon}\label{sec:arabinose}
Our understanding of gene regulation owes much to early work on the choices that bacteria make about preferred carbon sources.
In the 1960s, the operator-repressor model of Jacob and Monod~\cite{jacob_genetic_1961} was vastly extended through the discovery of gene activation~\cite{englesberg_positive_1965, zubay_mechanism_1970}. One of the central case studies leading to this understanding was the well-studied arabinose operon, shown schematically in Fig.~\ref{fig:arabinose}(A).

Concretely, the promoter of \emph{araBAD} is regulated by AraC, which binds multiple operator sites to activate transcription in the presence of arabinose~\cite{schleif_regulation_2000}.
As shown in Fig.~\ref{fig:arabinose}(B), when arabinose is present, there exist at least five major peaks in the local mutual information.
Recall from Section~\ref{sec:synthetic_validation}, the footprints alone indeed do not provide a systematic way to group these peaks into the larger structures that constitute binding sites.
To see how information blueprints circumvent this ambiguity, we first plot the mutual information of the compressed words $I(\bm{T}:\mu)$ versus number of hyperletters.  We find that the information plateaus at three hyperletters with no improvements deriving from adding extra ones
as shown in Fig.~\ref{fig:arabinose}(C). Our algorithm, suggests that only three binding sites exist.

Once we identified the number of binding sites, we can determine their location using the compression filters associated with each of them. To do that we use the strategy discussed in Section~\ref{sec:bio_prior} comprised of (i) a sliding envelope that searches for an optimal position of the filters and (ii) an adaptive scheme to optimize the trial function within each filter as revealed in Fig.~\ref{fig:sliding_filters}. As shown in Fig.~\ref{fig:arabinose}(D), the three optimised filters localise precisely on the known two AraC binding sites and the RNAP site.

In this figure, we introduce our graphical notation for characterizing the positions of the different hyperletters.  For example, we see that for hyperletter 1, it is associated with an activator binding site.  This implies that $T_1 = 1$ for all mutant versions of this binding site that are sufficiently disruptive to impact expression and $T_1 = 0$ for those mutants which preserve high expression. Further, the colored base pair resolution maps below each schematic show the weights associated with each base in the filter $\Lambda_1$.

In addition we use the optimised envelope functions to select the sequence associated with each binding site and compare it to the known sequences, where we neglect letters at positions near the envelope edges with weights smaller than 15\% of the maximum value in each filter component.
The resulting extracted subsequences match the known motifs shown in Fig.~\ref{fig:arabinose}~(A).

The condition-dependence of regulatory architecture is already visible in this example.
In the absence of arabinose, AraC no longer activates transcription and the compressed information $I(\bm{T};\mu)$ remains constant at all word lengths~$n\geq 1$, as shown by the red markers in Fig.~\ref{fig:arabinose}(C).
Strikingly, the optimal filter couples strongly to only a single isolated position whose mutation restores non-zero information.
This single mutation is thought to create a new transcription start site.
This demonstrates that the method can not only recover regulatory elements but also flag more exotic features such as latent promoter creation by point mutations.

\subsection{Deploying informational blueprint at scale across many growth conditions}

In moving from synthetic systems to real MPRA data, an additional ingredient becomes central: growth conditions.
Indeed, the very essence of gene regulation is to turn genes on or off depending on the environment, ultimately giving rise to different cell states and types.
Transcription factors are only active if they are present, modified, or ligand-bound, and these states are set by the environment~\cite{Roeschinger2026}.
Growth conditions are therefore an essential input to the regulatory logic circuit itself: different conditions selectively activate different branches of the underlying promoter architecture, so that the regulatory blueprint---including the number and identity of operative binding sites---is inherently condition-dependent.  Indeed, we advocate for the idea that in the future, databases will explicitly acknowledge how regulatory architecture depends upon conditions, revealing different constellations of binding sites depending upon those conditions.

A key advantage of the information-blueprint framework is that it can be applied systematically across a large number of conditions for the same promoter, mapping out how the operative regulatory architecture changes with the environment.
We demonstrate this on the \emph{tisB} promoter~\cite{dorr2010ciprofloxacin, su2022tisb}, for which MPRA data are available in 39 distinct growth conditions~\cite{Roeschinger2026}.

For each condition we run the full compression pipeline from section~\ref{sec:bio_prior} and record both the predicted number of binding sites and the corresponding information blueprints. 
We determined plateaus by checking if the information gain due to additional hyperletters is less than 10\%.
The results, summarised in Fig.~\ref{fig:deploying_at_scale}, reveal a rich condition-dependent spectrum: from simple single-site architectures (e.g.\ hydrogen peroxide) to multi-site regulatory logic involving up to three predicted binding sites (e.g.\ stationary phase).
Across most conditions for this promoter we find a repressor site between positions $-35$ and $-15$ relative to the TSS.

In Fig.~\ref{fig:discovery} we show the information blueprints for four additional promoters.
The compression yields concrete predictions: putative binding sites, their approximate positions and---via the relative signs of the filter weights---the likely regulatory role (activator vs.\ repressor) of each site.

\begin{figure*}[hbt!]
    \centering
    \includegraphics[width=0.95\textwidth]{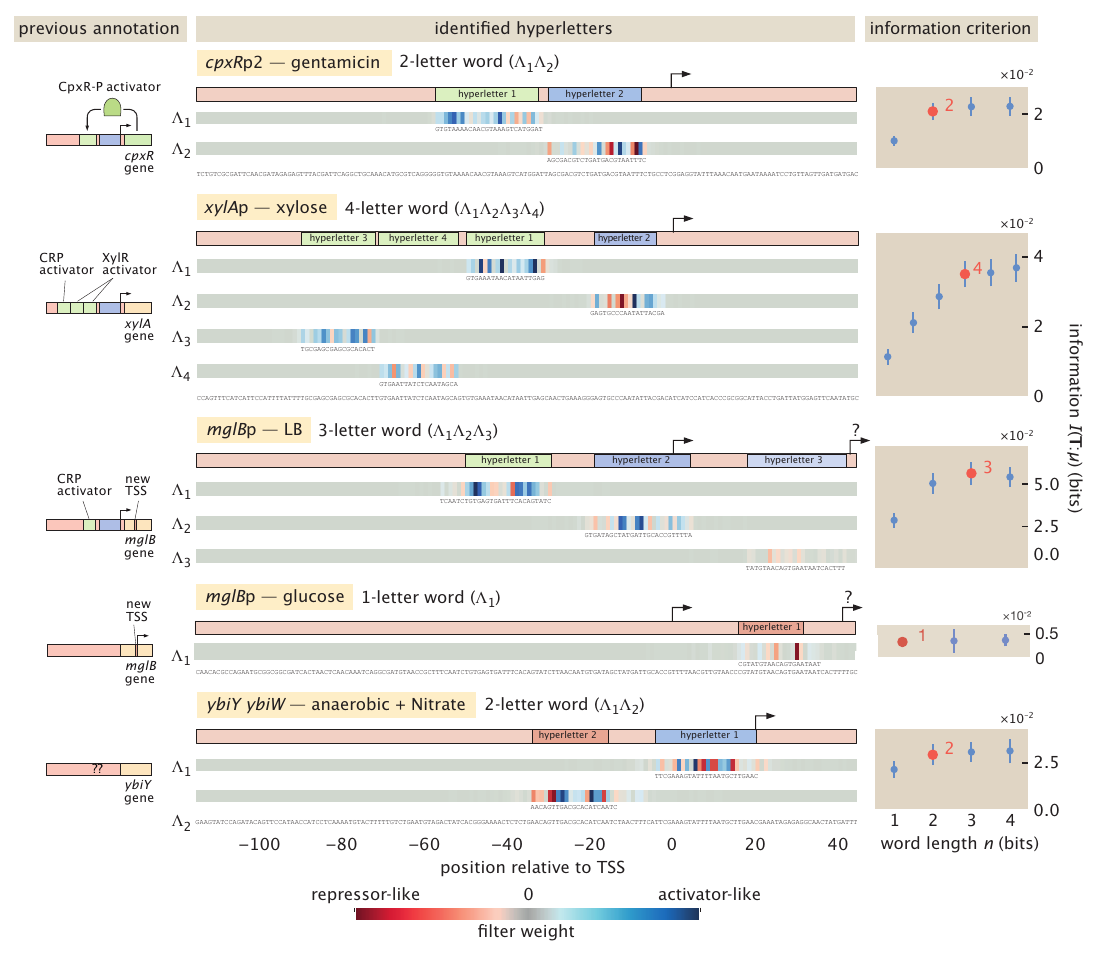}
    \caption{{\bf Discovering regulatory architectures in \emph{E.\ coli}.}
Information blueprints for four promoters that were previously annotated partially, illustrating the generality of the method. Each panel shows the identified filter components $\Lambda_i$ (heatmaps with wild-type subsequences) and the compressed information curve $I(\boldsymbol{T};\mu)$ versus word length~$n$ for a representative growth condition.
\emph{cpxR}p2 in gentamicin (2~binding sites),
\emph{xylA}p in xylose (4~binding sites),
\emph{mglB}p in LB and glucose (2~binding sites and a new TSS), and
the predicted \emph{ybiY} promoter under anaerobic conditions with nitrate (2~binding sites).
The diversity of filter structures across promoters and conditions demonstrates that the method can recover condition-specific regulatory logic from MPRA data without prior knowledge of transcription factor binding motifs.
}
\label{fig:discovery}
\end{figure*}

Sequence-similarity methods have annotated this region as a LexA binding site---but such annotations are static: they cannot distinguish conditions where the site is 
operative from those where it is not. Our blueprints can. Under hydrogen peroxide (H$_2$O$_2$) and phenazine methosulfate stress, the repressor signal vanishes. This is consistent with the known biology: \emph{tisB} is a toxin gene repressed by LexA~\cite{d1985sos, roth2022transcriptomic, little1991mechanism, giese2008reca}, and DNA damage triggers the SOS response, relieving this repression. The disappearance of the repressor signal in precisely these stress conditions confirms that the blueprint captures condition-specific regulatory logic. This analysis illustrates how the method scales to database-level deployment: the same automated pipeline processes all 39 datasets, returning a condition-by-architecture map of which regulatory elements are active under which environmental signals.

\subsection{Discovering regulatory architectures in \emph{E.\ coli}.}

We now deploy our information blueprint approach to discover regulatory architectures in {\it E. coli} starting with known cases and culminating in  promoters that have not been previously annotated, see Fig. \ref{fig:discovery}.

For the \emph{cpxR}p2 promoter under gentamicin stress, the blueprint identifies two binding sites.
This is consistent with the known positive autoregulation of the CpxRA two-component system: CpxR-P binds a consensus motif in its own promoter region, amplifying the envelope stress response~\cite{dewulf1999}.
The two predicted sites correspond to the CpxR-P autoregulatory site and the RNAP site, respectively.

For \emph{xylA}p in xylose, the compression resolves four nearly adjacent binding sites.
The xylose utilisation operons are known to be co-regulated by the activator XylR, which binds two direct-repeat sites upstream of the $-35$ region, and by CRP, whose binding site flanks the XylR operator~\cite{song1997}.
The four-site architecture recovered by the blueprint is compatible with these known sites; the first and fourth filters likely capture the two XylR sites, while the second and third filters correspond to the RNAP and CRP sites, respectively.

We then look at the galactose/methyl-galactoside transport operon, \emph{mglB}p. For this gene we find three sites when the carbon source is LB or xylose.
This promoter is known to be activated by CRP and repressed by the galactose-responsive regulators binding the \emph{mgl} operator~\cite{weickert1993, KRISHNA2009671}.
While the first two filters of the blueprint capture the known CRP and RNAP sites, intriguingly, the third filter localizes to a distinct highly local region which has been identified as a new transcriptional start site, as a mutation at the +30 position leads to a consensus -10 sequence~\cite{Roeschinger2026}, suggesting that this site may play a regulatory role by modulating the balance between two alternative TSSs.
In contrast, when the carbon source is glucose, we found a single-letter word that locally couples to this predicted new TSS at +30 position.

Finally, we investigate the putative promoter region of \emph{ybiY}, for which no regulatory annotation is currently available.
Under anaerobic conditions, our algorithm yields two filters.
The second filter has activator-like couplings in the -30 -- -15 region, which may be part of an RNAP site.
More surprisingly, the first filter exhibits strong couplings in the $0$ to $+20$ region---downstream of the annotated transcription start site and inside the gene body. One possible explanation is that the predicted TSS for \emph{ybiY} was inaccurately assigned; alternatively, a genuine regulatory element may reside within the transcript leader.

These results demonstrate how our pipeline can produce concrete hypotheses for the environment-dependent regulatory architectures of promoters, which can be experimentally tested in the wet lab.
For example, one strategy, as may be suggested by the structure of the compression itself, is to introduce targeted mutations in the regions identified by the filters.
Because each filter partitions sequences into those with intact vs.\ disrupted binding sites, mutating the predicted region effectively constructs the conditional distribution $P(\mu| T_\nu = 0)$ experimentally.
Comparing expression in such a targeted library against the wild-type background would directly probe whether the predicted site carries regulatory information.

\section{Conclusions and Outlook}

The compression framework presented here offers a principled way to extract regulatory architectures from high-throughput sequence-expression data  without prior assumptions about motifs, TF identities or interactions.
Deployed on synthetic and experimental data, our method recovers known binding sites and regulatory logic in well-characterized promoters, while generating testable predictions in unannotated y-ome promoters.
Since the method optimizes a global information-theoretic objective, it naturally captures cooperative interactions between binding sites, including non-local ones that induce structural changes on the DNA itself.
This provides a systematic way to connect MPRA data to underlying biophysical models of gene regulation.

Beyond annotation, our method can provide insights into the connection between DNA sequence and evolution. 
Though of course much evolutionary change focuses on the protein coding sequences of the genome, there are also scores of examples where instead, it is the regulatory part of the genome that is mutated.  
Further, this approach can also provide actionable guidance for promoter engineering: the filter weights indicate which positions to target for tuning expression levels, and the sign structure reveals whether mutations will increase or decrease output. This could accelerate the design of synthetic regulatory circuits with predictable behavior.

Positions identified by the information blueprint—those with high filter weights—are predicted to be under selective constraint: mutations here significantly alter expression and thus may be likely purged by selection. This connects our information-theoretic approach to evolutionary signatures of functional constraints, offering a complementary lens to phylogenetic footprinting methods that rely on cross-species conservation.

Looking ahead, the compression of individual promoters here can be viewed as the first step of an iterative coarse-graining procedure. 
The full many-body problem of gene regulation—how genes collectively give rise to cellular phenotypes—requires understanding interactions between genes, not just within them.
Just as real-space renormalisation group methods decimate microscopic degrees of freedom to reveal effective interactions at larger scales, one could imagine applying this framework genome-wide—compressing nucleotide sequences into binding configurations, then binding configurations into gene-gene interactions, and ultimately to the low-dimensional cell states that emerge from regulatory network dynamics.

\section{Materials and Methods}

\subsection{Optimal lossy compression framework}\label{sec:framework}

\subsubsection{Optimal compression filters localise on binding sites}\label{sec:covariance}
 
Here we sketch a simple analytical argument explains why the optimal compression filters $\bm{\Lambda}_\nu$ localise on binding sites. In the linear regime ($\sigma\approx\mathrm{id}$), the hyperletter $T^{(m)}=\sum_i\Lambda_i\,B_i^{(m)}$ is a scalar projection, and the IB objective $\max_{\bm{\Lambda}}I(T;\mu)$ reduces to maximising the squared correlation between $T$ and~$\mu$. Writing $\mathrm{Cov}(T,\mu)=\sum_i\Lambda_i\,C_i$ with
\begin{equation}\label{eq:cov_binary}
  C_i \;=\; \mathrm{Cov}\bigl(B_i,\;\mu\bigr) \;=\; \bigl\langle B_i\,\mu\bigr\rangle - \bigl\langle B_i\bigr\rangle\bigl\langle\mu\bigr\rangle
\end{equation}
and $\mathrm{Var}(T)=\bm{\Lambda}^{\!\top}\Sigma_{BB}\,\bm{\Lambda}$ where $(\Sigma_{BB})_{ij}=\mathrm{Cov}(B_i,B_j)$, the optimum of $(\bm{\Lambda}\cdot\bm{C})^2/(\bm{\Lambda}^{\!\top}\Sigma_{BB}\,\bm{\Lambda})$ is $\bm{\Lambda}\propto\Sigma_{BB}^{-1}\,\bm{C}$. In a randomly mutagenised library each position is mutated independently with probability~$p_{\mathrm{mut}}$, so $\Sigma_{BB}$ is approximately diagonal and the optimal filter simplifies to $\Lambda_i\propto C_i$.
 
Because expression depends on sequence almost exclusively through the binding energies of the regulatory proteins, $\mu = \mu(E_1,\dots,E_{n_{\mathrm{TF}}})$, and each energy $E_a = \sum_{i\in S_a}\varepsilon_{i,s_i}^{(a)}$ is affected only by mutations within its own site~$S_a$, the chain rule gives
\begin{equation}\label{eq:cov_decomp}
  C_i \;=\; \sum_{a=1}^{n_{\mathrm{TF}}} \chi_a\;\delta\bar{E}_i^{(a)}\;\mathbf{1}_{i\in S_a},
\end{equation}
where $\chi_a = \langle \partial\mu/\partial E_a \rangle_{\mathrm{eff}}$ is the thermodynamic susceptibility of expression to the binding energy of TF~$a$, and $\delta\bar{E}_i^{(a)}$ is the binding energy shift caused by a mutation at position~$i$, averaged over the three possible substitutions of a base. Equation~\eqref{eq:cov_decomp} is non-zero only at positions that belong to a binding site, explaining filter localisation. It also shows that cooperative interactions---which correlate the susceptibilities $\chi_a$ across the mutant ensemble---can cause a single filter to span multiple sites (see Appendix~\ref{app:constitutive} for the constitutive promoter and Appendix~\ref{app:dna_looping} for DNA looping).

\subsubsection{Determining the rate of compression and information bottleneck phase transitions}\label{sec:phase_transitions}
In the information bottleneck framework, the most relevant features are targeted by enforcing a sufficiently high rate of compression in Eq.~\ref{eq:objective}.
Here we fix the rate of compression by choosing a certain number of filters since the information capacity of the compressed variable is bound by its number of components. 

Crucially, as the rate of compression is reduced, \emph{e.g.} by adding more filters, the mutual information does not increase smoothly, but it rather undergoes a sequence of sharp transitions, each following a plateau of saturation~\cite{NIPS2002_ccbd8ca9}. 
These transitions are associated with a discrete hierarchy of relevant features of the data.

Suppose that the base-sequence data $\bm{B}$ can be decomposed into a set of features $\{f_1, f_2, \ldots \}$ with a decreasing order of relevance for the expression level $\mu$\footnote{Our hypothesis is that the most relevant feature $f_1$ is a representation of the binding sites of the TFs.}.
For the highest rate of compression, i.e. for a 1-bit variable $\bm{T}$, the filter $\Lambda$ will only target the most relevant feature $f_1$ partially.
Upon adding further components into $\bm{T}$, instead of mixing in sub-relevant features $f_2, f_3, \ldots$, the optimal compression instead utilises these components to express the more fine-grained aspects of the same most relevant feature $f_1$. 
When $f_1$ is finally expressed perfectly, the mutual information then saturates until the number of filters $n$ becomes sufficiently large to start expressing the next relevant feature $f_2$.

The sufficient number of filters in the optimal compression scheme can therefore be determined systematically by monitoring the first saturation of mutual information as a function of the number of filters $n$. 
In the case of the simple repression architecture, we see that the mutual information saturates at two filters, matching the number of distinct TFs in the system.

\begin{figure*}[hbt!]
    \centering
    \includegraphics[width=0.9\textwidth]{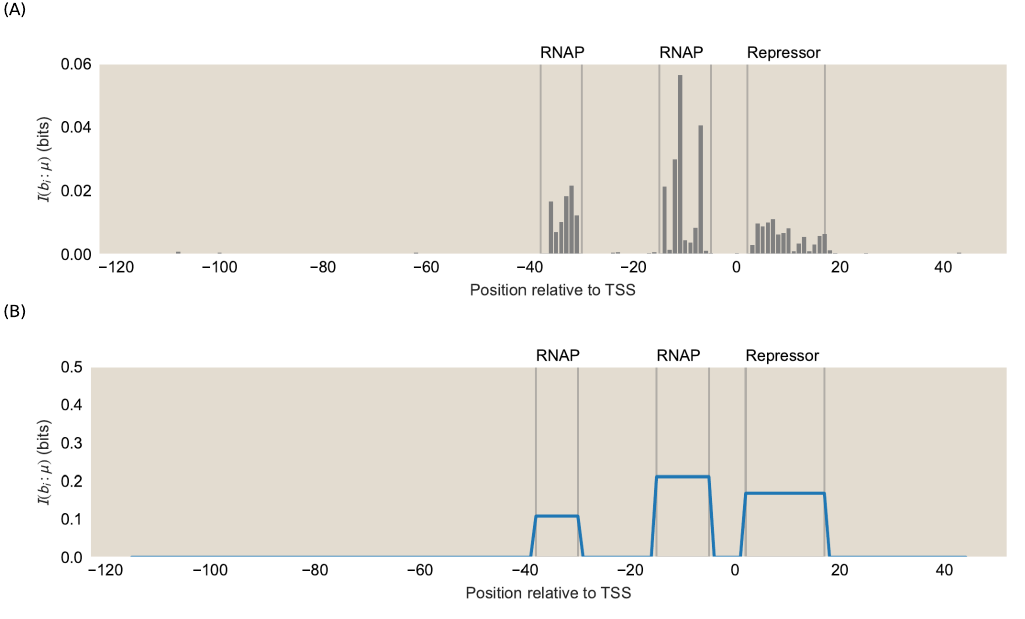}
    \caption{\textbf{Information footprints for the simple repression architecture (synthetic data).}
        (A)~Per-site information footprint $I(B_i:\mu)$ computed with the InfoNCE estimator. The signal is on the order of millibits.
        (B)~Regional information footprint, where mutual information is estimated for entire binding regions rather than individual sites. The collective signal is about an order of magnitude larger, providing substantially more robustness against noise.}
    \label{fig:footprint}
\end{figure*}

\subsection{Mutual information estimation}\label{sec:estimation}
Estimating mutual information from finite data is a central technical challenge for our framework.
Mutual information $I(A:B)$ measures the statistical dependence between two random variables via the Kullback-Leibler divergence between their joint and product-of-marginals distributions:
\begin{align}
    I(A:B) &= \mathbb{E}_{P(a,b)}\left[\log \frac{P(a,b)}{P(a)P(b)}\right]  \\
    &=: D_{\rm KL}[P(a,b) \| P(a)P(b)]  \nonumber ,
\end{align}
where base-2 logarithms are implied throughout to measure information in bits.
In our setting, $a$ corresponds to the compressed word $\bm{T}$ (or, at the single-site level, a base $B_i$) and $b$ to the expression level $\mu$.

\subsubsection{Limitations of histogram-based estimation.}
A common approach is to bin the continuous expression level $\mu$ into discrete states (\emph{e.g.}, $\mu \in \{0,1\}$ representing off and on) and estimate entropies from normalised histograms $Q(\mu_{\rm bin}) = \text{count}(\mu_{\rm bin})/m$, yielding
\begin{align}
    H(\mu) &= -\sum_\mu P(\mu) \log P(\mu) \approx -\sum_{\mu_{\rm bin}} Q(\mu_{\rm bin}) \log Q(\mu_{\rm bin}).
\end{align}
While simple, this approach has two drawbacks.
First, binning voluntarily erases information---dividing a continuous distribution into a few discrete bins loses details that may be relevant for identifying binding sites.
Second, the per-site mutual information signal $I(B_i:\mu)$ is typically on the order of millibits [Fig.~\ref{fig:footprint}(A)], making reliable estimation difficult with realistic sample sizes even when the improved InfoNCE estimator~\cite{infonce} is used in place of naive histograms.

\subsubsection{Variational lower bounds.}
We instead exploit a variational representation of mutual information that avoids binning entirely.
The key idea is that if $I(A:B)$ is large, samples from the joint distribution $P(a,b)$ should be distinguishable from independently shuffled pairs drawn from $P(a)P(b)$.
Given iid samples $[(a_i, b_i)]_{i=1}^m$ from $P(a,b)$, any cross-pairing $(a_i, b_{j\neq i})$ is distributed according to $P(a)P(b)$. High mutual information means a classifier $f$ can reliably separate the two:
\begin{equation}
    f(a_i, b_i) > 0,\quad f(a_i, b_{j\neq i}) < 0.
\end{equation}
This intuition is formalised by the Donsker-Varadhan representation~\cite{donsker_varadhan_1983, mine, poole2019variationalboundsmutualinformation}:
\begin{equation}\label{eq:DV}
    I(A:B) = \sup_{f\in\mathcal{F}} \left\{ \mathbb{E}_{P(a,b)}[f(a,b)] - \log \mathbb{E}_{P(a)P(b)}[e^{f(a,b)}] \right\},
\end{equation}
where the supremum is taken over a function class $\mathcal{F}$. For any fixed $f$, the expression inside the braces is a lower bound on the mutual information (see Appendix for a self-contained proof). By parameterizing $f$ as a neural network (a ``critic'') and optimizing via stochastic gradient descent, one obtains a tight, differentiable lower bound on $I(A:B)$.

In practice, we use the InfoNCE estimator~\cite{infonce}, where the expectations in Eq.~\eqref{eq:DV} are computed as Monte Carlo averages over minibatches. This estimator has two properties essential for our framework.
First, it operates directly on continuous expression values, eliminating the need for binning and the associated information loss.
Second, the bound is differentiable with respect to the compression map $\Lambda$, so mutual information can serve simultaneously as the objective function for optimizing the compressed words $\bm{T}$.

\subsubsection{From local to global information.}
The information footprint approach~\cite{Kinney2010Final, Ireland2020} computes the per-site mutual information $I(B_i : \mu)$ independently for each position, flagging sites above a threshold $\epsilon$:
\begin{equation}\label{eq:infofootprint}
    \left\{i:\, I(B_i: \mu) > \epsilon \right\}.
\end{equation}
As shown in Fig.~\ref{fig:footprint}(A), this signal is on the order of millibits for a synthetic simple-repression architecture.
Using the variational estimator, we can go beyond single sites and directly estimate the mutual information of entire binding regions, $I(\{B_i \in \text{binding region}\}: \mu)$, without histogram construction.
The collective signal of entire binding regions is an order of magnitude larger than that of individual sites, providing substantially more robustness against noise [Fig.~\ref{fig:footprint}(B)].

The informational blueprint takes this to its logical conclusion: rather than computing mutual information for pre-selected regions, we optimize the compression map $\Lambda$ itself to discover which regions matter. This amounts to the transition from a local to a global information-theoretic probe~\cite{PhysRevE.104.064106}:
\begin{equation}\label{eq:proposal_informal}
    \underbrace{\arg\max_i \left\{I(B_i: \mu) \right\}}_{\text{local footprint}}\quad \rightarrow\quad \underbrace{\arg\max_{\Lambda}\left\{I(\Lambda \circ \bm{B}: \mu) \right\}}_{\text{global compression}},
\end{equation}
where $\Lambda: \mathbb{Z}_{2}^N \rightarrow \mathbb{Z}_2^n$ compresses the full binary mutation vector into a short word $\bm{T}$ with $n\ll N$.

\subsubsection{Critic architecture and training details.}
The critic function $f$ in the InfoNCE estimator is parameterized as a 2-layer MLP with 64 hidden units and ReLU activations. The threshold function $\sigma$ in the compression is approximated during training using the straight-through estimator. We optimize using Adam with learning rate $10^{-3}$ for $10^4$ steps with mini-batch size 256. Each optimization is repeated from 20 random initializations; we report filters from the run achieving highest $I(\bm{T}:\mu)$.

\subsection{Solving the variational compression problem with different trial functions}\label{sec:trial_functions}

The compression in Eq.~\eqref{eq:envelope_filter} defines a variational problem: the mutual information $I(\bm{T}; \mu)$ is maximized over the space of trial filters $\Lambda_{\nu i}$. As in all variational principles, the design of the trial function directly influences the practical performance of the optimization. An overly restrictive ansatz may fail to capture relevant structure, while a fully unconstrained parameterization can be difficult to optimize in the presence of noise. In this section, we describe three progressively more sophisticated parameterizations of the filter---from fully unconstrained weights to scalar-amplitude Gaussians to the envelope factorization introduced in Section~\ref{sec:bio_prior}---each offering a different trade-off between expressiveness and regularization.

\subsubsection{Unconstrained linear filters}\label{sec:localised_filters}
In the simplest parameterization, each filter component $\Lambda_{\nu i}$ is a vector of $N$ freely optimized weights---one per sequence position. This maximally expressive ansatz can, in principle, capture any linear compression of the binary mutation vector $\bm{B}$. However, the $\mathcal{O}(nN)$ free parameters make the optimization susceptible to overfitting, particularly when the mutual information signal is weak relative to experimental noise. In practice, the unconstrained filters nevertheless localize on binding sites (Figs.~\ref{fig:wordlength},~\ref{fig:synthetic}), but they can exhibit noisy tails outside the true binding regions and may fail to resolve additional sites when the information gain per hyperletter becomes small.

\begin{figure}[ht!]
    \centering
    \includegraphics[width=\columnwidth]{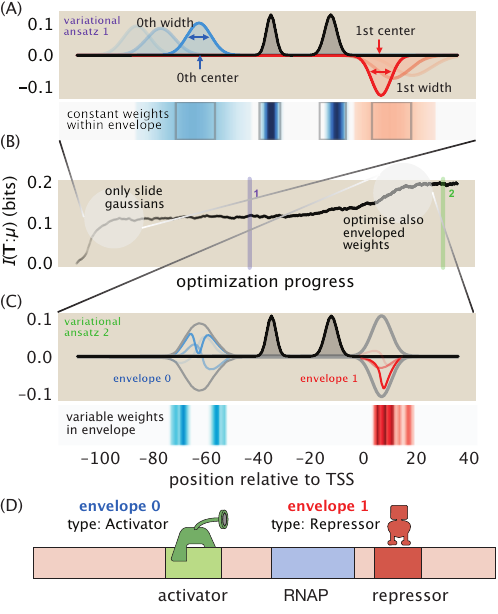}
    \caption{\textbf{Biologically informed variational ansatze for compression filters.}
    (A)~As the simplest way of incorporating biological priors, each envelope component is parameterized by a center $\mu_{\nu,\alpha}$ and width $\sigma_{\nu,\alpha}$; transparency indicates optimization progress from early (light) to late (dark) iterations. The weights remain constant within each envelope.
    (B)~Mutual information $I(T:\mu)$ between the compressed variable and expression level during optimization for two variational ansatze. 
    (C)~A richer ansatz function frees the per-position weights $\lambda_{\nu i}$  within the envelopes and allows the filter to capture detailed structure within a binding site while remaining localized by the envelope. This recovers more information about gene expression than (A).
    (D)~The learned envelope parameters can be read out as binding site positions and widths, 
    here identifying an activator upstream and a repressor downstream of the transcription start site, 
    while the enveloped weights recover the internal structure of each site.
    }
    \label{fig:sliding_filters}
\end{figure}

\subsubsection{Scalar-amplitude Gaussian filters}\label{sec:gaussian_filters}
To incorporate the biological prior that TF binding sites span approximately 15--25\,bp, we constrain each filter to a Gaussian envelope with a scalar amplitude, as shown in Fig.~\ref{fig:sliding_filters}~(A):
\begin{equation}\label{eq:scalar_gaussian}
    \Lambda_{\nu i} = \alpha_\nu \exp\!\left[-\frac{(i - c_\nu)^2}{2w_\nu^2}\right]\,,
\end{equation}
with learnable center $c_\nu$, width $w_\nu$, and amplitude $\alpha_\nu$. This reduces the number of free parameters from $\mathcal{O}(N)$ per filter to just three.
To accommodate bipartite binding sites such as the RNAP $-10$ and $-35$ boxes, each filter can be extended to a mixture of $K$ Gaussian components,
\begin{equation}
    \Lambda_{\nu i} = \sum_{k=1}^{K} m_{\nu,k}\, \alpha_{\nu,k} \exp\!\left[-\frac{(i - c_{\nu,k})^2}{2w_{\nu,k}^2}\right]\,,
\end{equation}
where $m_{\nu,k}$ are mixing weights normalized via softmax (typically $K=2$).
The learnable parameters are constrained as follows: centers $c_{\nu,k}$ are parameterized via sigmoid to lie within $[0, N]$; widths $w_{\nu,k}$ are parameterized via $\tanh$ to lie within $[w_{\min}, w_{\max}]$ (default 10--30\,bp); and amplitudes $\alpha_{\nu,k}$ are unconstrained real values whose sign encodes the regulatory function (positive for activators, negative for repressors relative to RNAP).

Binding site predictions are read off directly from the learned parameters: $c_{\nu,k}$ gives the site location, $w_{\nu,k}$ gives the footprint width, and $\mathrm{sign}(\alpha_{\nu,k})$ indicates activator versus repressor function.

\subsubsection{Envelope-parameterized filters}\label{sec:enveloped_filters}
The envelope parameterization introduced in Section~\ref{sec:bio_prior} factorizes each filter as
\begin{equation}
    \Lambda_{\nu i} = \mathcal{W}_{\nu i}\,\lambda_{\nu i}\,,
\end{equation}
where $\mathcal{W}_{\nu i} \in [0,1]$ is a smooth localizing envelope and $\lambda_{\nu i}$ are freely optimized per-position weights, as visualised in Fig.~\ref{fig:sliding_filters}~(C). This interpolates between two limits: when $\lambda_{\nu i}$ is constant across positions, it reduces to the scalar-amplitude Gaussian of Eq.~\eqref{eq:scalar_gaussian}; when the envelope width $w_\nu \to \infty$, it recovers the fully unconstrained filter. In practice, the biological prior on width regularizes the optimization landscape sufficiently that filter centers converge reliably to binding site positions, while the free weights resolve internal structure that a scalar amplitude cannot capture, and thereby achieve a larger amount if mutual information, as illustrated in Fig.~\ref{fig:sliding_filters}~(B).

We use two envelope shapes. The Gaussian envelope,
\begin{equation}
    \mathcal{W}_{\nu i}^{\rm gaussian} = \exp\!\left[-\frac{(i - c_\nu)^2}{2w_\nu^2}\right]\,,
\end{equation}
provides smooth localization with learnable center $c_\nu$ and width $w_\nu$. Alternatively, a soft rectangular window is constructed from a product of sigmoids:
\begin{equation}
    \mathcal{W}_{\nu i}^{\rm rectangle}(\tau) = \sigma_\tau\!\left(i - c_\nu + \tfrac{w_\nu}{2}\right)\,\sigma_\tau\!\left(c_\nu + \tfrac{w_\nu}{2} - i\right)\,,
\end{equation}
where $\sigma_\tau(x) = (1 + e^{-x/\tau})^{-1}$ and $\tau$ controls the edge sharpness. The Gaussian provides smoother localization; the sigmoid product produces a flatter passband that weights positions within the window more uniformly. In both cases, the width $w_\nu$ is biased to $\sim 20$--$30$\,bp but remains learnable.
The resulting optimal envelopes for each hyperletter can then be used to directly filter out the underlying sequence of the predicted binding sites at base-pair resolution, see Fig.~\ref{fig:sliding_filters}~(D).

\paragraph{Discrete center updates.}
Since promoter sequences are discrete, we optimize the envelope centers $c_\nu$ using a sign-accumulation rule rather than standard gradient descent. At each step, the sign of $\partial I / \partial c_\nu$ is computed and accumulated over a window of training steps. The center moves by one base pair in the majority direction only when the accumulated vote exceeds a threshold, ensuring that centers lock onto optimal positions rather than jittering due to gradient noise.

\section*{Acknowledgements}
D.E.G acknowledges support by the NSF-Simons National Institute for Theory and Mathematics in Biology (NITMB) Fellowship supported via grants from the NSF (DMS-2235451) and Simons Foundation (MPS-NITMB-00005320). 
R.W.P is grateful to Charles Trimble for support.
T.R. was supported by a fellowship from Boehringer Ingelheim Fonds.
H.G.G. was supported by the National Institutes of Health R35GM158200, a Winkler Scholar Faculty Award, the Chan Zuckerberg Initiative (grant CZIF2024-010479), the Miller Institute for Basic Research in Science, University of California Berkeley. H.G.G is also a Chan Zuckerberg Biohub investigator.
We are grateful to the NIH for support through award numbers DP1OD000217 (Director's Pioneer Award) and NIH MIRA 1R35 GM118043-01.
This research was partly supported from the National Science Foundation through the Center for Living Systems (grant no. 2317138), the National Institute for Theory and Mathematics in Biology, the Simons Foundation and the Chan Zuckerberg Foundation. This work was completed in part with resources provided by the University of Chicago’s Research Computing Center.

\bibliographystyle{unsrt}
\bibliography{bibliography}

\clearpage
\onecolumngrid
\appendix

\newpage
\section{Calculations for the model in Box 1}\label{Appendix:BoxCalculations}

Box~\ref{box:algorithm} provides a toy model of the variational calculations used to find the ``best'' coarse grained description of a highly simplified promoter in which the RNAP site corresponds to a single base pair.  The aim of the toy calculation presented there and the slightly more sophisticated and complete one presented here is to introduce the key concepts and notation used in the full approach carried out in the paper.   As seen in the Box, this calculation is ``variational'' in the sense that we have a number of candidate trial functions. We use each such trial function to compute a score in the form of the mutual information between our hyperletters and the level of gene expression, and out of that collection of trial functions, we pick the winner.  For the cases considered here, there is a very small set of competitors and we explicitly compute the information for each of those competitors and, as shown in the box, compare them and choose the winner! 

This is similar in spirit to many other examples of simplified ``shape functions'' throughout mathematics, physics and engineering.  For example, in the method of linear combination of atomic orbitals (LCAO) in quantum mechanics, one writes a molecular wave function in a highly simplified way as a linear combination of atomic wave functions centered on each of the atoms making up the molecule.  In the finite element method, one discretizes some partial differential equation into a discrete set of nodes and carries out interpolation between those nodes.  In both of these examples, once the simplfied set of candidates has been selected, one then chooses a winner according to some scoring function.  That is the spirit of the toy calculations presented in the Box and explored more deeply here.

\subsection{One-site filter}

As a reminder, Box~\ref{box:algorithm} features a toy model of a promoter sequence containing five sites. Only the fourth   site is involved in RNAP binding. As a result, sequences with mutations at that site will results in a change in gene expression.
Our objective is to show how the information blueprint approach makes it possible to identify this position in a principled way. Specifically, we seek to calculate the mutual information between gene expression and the different hyperletters $T(x)$, each defined by its corresponding filter $\mathbf{\Lambda(x)}$. More concretely, for the $m$th mutant we have
\begin{equation}\label{eq:toy_hyperletters}
    T^{(m)}(x) = \sigma\left(\sum_{i=1}^5  B^{(m)}_i \Lambda_i(x) \right),
\end{equation}
where $B^{(m)}_i=0$ indicates that the base pair at position $i$ of the $m$th mutant is wild type and $B^{(m)}_i=1$ marks the presence of a mutation. Recall that our toy MPRA library consists of five mutants, each with a single mutation at a unique site, that is:
\begin{equation}
\begin{aligned}
& \mathbf{B}^{(m=1)}=(1,0,0,0,0) \qquad \text{first mutant sequence}, \\
& \mathbf{B}^{(m=2)}=(0,1,0,0,0) \qquad \text{second mutant sequence}, \\
& \mathbf{B}^{(m=3)}=(0,0,1,0,0) \qquad \text{third mutant sequence}, \\
& \mathbf{B}^{(m=4)}=(0,0,0,1,0) \qquad \text{fourth mutant sequence}, \\
& \mathbf{B}^{(m=5)}=(0,0,0,0,1) \qquad \text{fifth mutant sequence}.
\end{aligned}
\end{equation}
or more compactly
\begin{equation}\label{eq:toy_sequences}
    B^{(m)}_i = \delta_{im}
\end{equation}

Because of the nature of the particular filters chosen for this example, $x$ corresponds to which site each vector $\mathbf{\Lambda(x)}$ filters out. Namely, the filter is only non-zero at one position given by
\begin{equation}\label{eq:toy_filters}
    \Lambda_i(x) = \delta_{ix},
\end{equation}
where the label $x$ tells us the non-zero site in the filter and the subscript $i$ defines the $i^{th}$ component of the filter vector  $\mathbf{\Lambda(x)}$.
More explicitly, for our 5-base pair promoter,  as shown in 
Figure~\ref{fig:FiltersBox}, the different trial filters are given by the vectors
\begin{equation}
\begin{aligned}
& \mathbf{\Lambda}(x=1)=(1,0,0,0,0) \qquad \text{first trial filter}, \\
& \mathbf{\Lambda}(x=2)=(0,1,0,0,0) \qquad \text{second trial filter}, \\
& \mathbf{\Lambda}(x=3)=(0,0,1,0,0) \qquad \text{third trial filter}, \\
& \mathbf{\Lambda}(x=4)=(0,0,0,1,0) \qquad \text{fourth trial filter}, \\
& \mathbf{\Lambda}(x=5)=(0,0,0,0,1) \qquad \text{fifth trial filter}.
\end{aligned}
\end{equation}
However, it is important to note that, as described in detail in the main text, both the shape of the filters, and which filters are most informative will be subject to optimization in the context of real world data on real promoters and using filters that are way more sophisticated than those used here.

As noted above, we must have a scheme for scoring each of our variational trial functions.  Throughout the paper, that scoring function is obtained by computing the mutual information between a given hyperletter and the level of gene expression, which for our toy model defined as a function of the filter position $x$
\begin{equation}
    I(\mu : T(x)) = \underbrace{H(\mu)}_{\mbox{initial uncertainty}} - \underbrace{H(\mu |T(x))}_{\mbox{remaining uncertainty}},
    \label{eqn:InformationScore}
\end{equation}
where $\mu$ corresponds to the level of gene expression. As a reminder, the entropy $H(\mu$) is given by
\begin{equation}
H(\mu) =-\sum_{\text {expression }} P(\mu) \log_2 (P(\mu)).
\end{equation}
Given a mutant sequence $\mathbf{B}^{(m)}$, Eq.~\ref{eq:toy_hyperletters} (that is, Eq.~\ref{eq:box_1} )
gives us a recipe to
calculate hyperletter for each mutant $m$, $T^{(m)}(x)$. For completeness, we reproduce the equation here by combining Eq.~\ref{eq:toy_sequences} for the $m$th mutant sequence $\mathbf{B}^{(m)}$, and Eq.~\ref{eq:toy_filters} for the trial filter $\Lambda(x)$: 
\begin{equation}
    T^{(m)}(x) = \sigma \left( \mathbf{B}^{(m)} \cdot \mathbf{\Lambda}(x) \right) =\sigma\left(\sum_{i=1}^5 B_i^{(m)} \Lambda_i(x) \right) = \sigma\left(\sum_{i=1}^5 \delta_{im} \delta_{ix}\right) = \sigma(\delta_{mx})=\begin{cases} 1, \quad m=x, \\ 0, \quad m\neq x \end{cases}.
    \end{equation}
The function $\sigma \left( \mathbf{B}^{(m)} \cdot \mathbf{\Lambda} \right)$ performs a thresholding operation by producing a score of its own by evaluating the magnitude of its argument $\mathbf{B}^{(m)} \cdot \mathbf{\Lambda}$.
For example, the value of the first hyperletter, for the first sequence is
\begin{equation}
\begin{aligned}
T^{(m=1)}(x=1) &= \sigma\left(\mathbf{B}^{(m=1)}  \cdot \mathbf{\Lambda}(x=1)\right) =\sigma\left((1,0,0,0,0)  \cdot (1,0,0,0,0)\right) \\
& =\sigma(1)=\sigma(\text {above threshold}) =1.
\end{aligned}
\end{equation}
In contrast, the value of the first hyperletter for the second sequence is
\begin{equation}
\begin{aligned}
T^{(m=2)}(x=1) &= \sigma\left(\mathbf{B}^{(m=2)}  \cdot \mathbf{\Lambda}(x=1)\right) =\sigma\left((0,1,0,0,0)  \cdot (1,0,0,0,0)\right)  \\
& =\sigma(0)=\sigma(\text {below threshold}) =0.
\end{aligned}
\end{equation}

To compute the total information score represented by eqn.~\ref{eqn:InformationScore}, we need to quantify how much knowing the hyperletter value reduces our uncertainty about expression. This is precisely the conditional entropy, defined by
\begin{equation}
    H(\mu |T(x)) \,=\, P(T(x)=0)  H (\mu |T(x)=0) \,+\, P(T(x)=1)  H (\mu |T(x)=1).
\end{equation}
The conditional entropy is a weighted average over the two possible hyperletter outcomes: for each value $T(x) \in \{0,1\}$, we compute the entropy of expression levels among only those sequences sharing that hyperletter value, then weight by the probability of that outcome occurring. If the filter has correctly identified a binding site, then conditioning on whether that site is intact ($T=0$) or disrupted ($T=1$) partitions the sequences into groups with more homogeneous expression levels --- meaning each $H(\mu | T(x) = 0)$ and $H(\mu | T(x) = 1)$ is smaller than our initial uncertainty $H(\mu)$ before any hyperletter was revealed.

We now proceed to calculate all the terms necessary to compute the mutual information. As seen in the Box, out of the five mutated sequences, four of them have high expression ($h$) and one of them has low expression ($l$).  As a result, the probability that gene expression is high  is given by
\begin{equation}
P(\mu)=\left\{\begin{array}{l}
P(\mu=h)=4 / 5 \\
P(\mu=l)=1 / 5
\end{array}\right.
\end{equation}
From these probabilities, we can compute the entropy associated with gene expression as
\begin{equation}
\begin{aligned}
H(\mu) & =-\sum_{\text {expression }} P(\mu) \log_2 (P(\mu)) \\
& =-[4 / 5 \log_2 (4 / 5)+1 / 5 \log_2 (1 / 5)].
\end{aligned}
\end{equation}

Next, we compute the conditional entropy for the different hyperletters. We begin with the first hyperletter, $T(x=1)$, such that
\begin{equation}\label{eq:HmuT1}
    H(\mu |T(x=1)) \,=\, P(T(x=1)=0)  H (\mu |T(x=1)=0) \,+\, P(T(x=1)=1)  H (\mu |T(x=1)=1).
\end{equation}
An examination of the table in the Box tells us that
\begin{equation}
    P(T(x=1)=0) = 4/5
\end{equation}
and 
\begin{equation}
    P(T(x=1)=1) = 1/5.
\end{equation}
Further, the conditional entropies on the right hand side of equation~\ref{eq:HmuT1} are
\begin{equation}
\begin{aligned}
H(\mu |T(x=1)=1)= & -\sum_{\mbox{expression}} P(\mu |T(x=1)=1) \log_2 \left( P(\mu |T(x=1)=1) \right)\\
= & -\left( p\left(\mu=h | T(x=1)=1\right) \log_2 (P(\mu=h | T(x=1)=1)\right. \\
& \left. + P(\mu=l | T(x=1)=1) \log_2 (P(\mu=l | T(x=1)=1) \right) \\
= & - \left( 1 \cdot \log_2 (1)+0 \cdot \log_2 (0)\right)=0,
\end{aligned}
\end{equation}
and
\begin{equation}
\begin{aligned}
H(\mu |T(x=1)=0)= & -\sum_{\mbox{expression}} P(\mu |T(x=1)=0) \log_2 \left( P(\mu |T(x=1)=0) \right)\\
= & -\left( p\left(\mu=h | T(x=1)=0\right) \log_2 (P(\mu=h | T(x=1)=0)\right. \\
& \left. + P(\mu=l | T(x=1)=0) \log_2 (P(\mu=l | T(x=1)=0) \right) \\
& =-\left(\frac{1}{4}\log_2 {1 \over4} +\frac{3}{4} \log_2 \frac{3}{4} \right) \\
& =\frac{1}{4} \log_2 4+\frac{3}{4} \log_2 \frac{4}{3}.
\end{aligned}
\end{equation}
Putting this all together we arrive at
\begin{equation}
\begin{aligned}
    H(\mu |T(x=1)) \,&=\, \underbrace{P(T(x=1)=0)}_{4/5}  \underbrace{H (\mu |T(x=1)=0)}_{\frac{1}{4} \log_2 4+\frac{3}{4} \log_2 \frac{4}{3}} \,+\, \underbrace{P(T(x=1)=1)}_{1/5}  \underbrace{H (\mu |T(x=1)=1)}_{0} \\
    &= {4 \over 5} \left(\frac{1}{4} \log_2 4+\frac{3}{4} \log_2 \frac{4}{3}\right).
\end{aligned}
\end{equation}

Assembling all of the pieces, we can now compute the score associated with the hyperletter $T(x=1)$ which is given by
\begin{equation}
\begin{aligned}
 I(T(x=1): \mu)= & H(\mu)-H(\mu |T(x=1)) \\
& \quad=-\left({4 \over 5} \log_2 {4 \over 5} + {1 \over 5} \log_2 {1 \over 5} \right) - {4 \over 5} \left(\frac{1}{4} \log_2 4+\frac{3}{4} \log_2 \frac{4}{3}\right) \\
 & \quad \approx 0.722 -0.649 \\
& \quad \approx 0.073.
\end{aligned}
\end{equation}
This score is plotted as the first bar in the bar graph shown in Figure~\ref{fig:box_mi} in the box.  
If we repeat this same calculation for the mutual information between the gene expression level $\mu$ and the hyperletters, $T(x=2)$, $T(x=3)$, and $T(x=5)$,
we find the exact same score.  There was nothing distinguishing any of these hyperletters, and thus in the histogram of scores, they all have the same mutual information.

The only case that potentially has a different mutual information score is for $T(x=4)$.
To that end, we compute the conditional entropy for hyperletter, $T(x=4)$, such that
\begin{equation}\label{eq:HmuT4}
    H(\mu |T(x=4)) \,=\, P(T(x=4)=0)  H (\mu |T(x=4)=0) \,+\, P(T(x=4)=1)  H (\mu |T(x=4)=1).
\end{equation}
An examination of the table in the Box tells us that
\begin{equation}
    P(T(x=4)=0) = 4/5
\end{equation}
and 
\begin{equation}
    P(T(x=4)=1) = 1/5.
\end{equation}
Further, the conditional entropies on the right hand side of equation~\ref{eq:HmuT4} are
\begin{equation}
\begin{aligned}
H(\mu |T(x=4)=1)= & -\sum_{\mbox{expression}} P(\mu |T(x=4)=1) \log_2 \left( P(\mu |T(x=4)=1) \right)\\
= & -\left( p\left(\mu=h | T(x=4)=1\right) \log_2 (P(\mu=h | T(x=4)=1)\right. \\
& \left. + P(\mu=l | T(x=4)=1) \log_2 (P(\mu=l | T(x=4)=1) \right) \\
= & - \left( 0 \cdot \log_2 (0) + 1 \cdot \log_2 (1)\right)=0,
\end{aligned}
\end{equation}
and
\begin{equation}
\begin{aligned}
H(\mu |T(x=4)=0)= & -\sum_{\mbox{expression}} P(\mu |T(x=4)=0) \log_2 \left( P(\mu |T(x=4)=0) \right)\\
= & -\left( p\left(\mu=h | T(x=4)=0\right) \log_2 (P(\mu=h | T(x=4)=0)\right. \\
& \left. + P(\mu=l | T(x=4)=0) \log_2 (P(\mu=l | T(x=4)=0) \right) \\
= & - \left( 1 \cdot \log_2 (1) + 0 \cdot \log_2 (0)\right)=0.
\end{aligned}
\end{equation}
Putting this all together we arrive at
\begin{equation}
\begin{aligned}
    H(\mu |T(x=4)) \,&=\, \underbrace{P(T(x=4)=0)}_{4/5}  \underbrace{H (\mu |T(x=4)=0)}_{0} \,+\, \underbrace{P(T(x=1)=1)}_{1/5}  \underbrace{H (\mu |T(x=1)=1)}_{0} \\
    &= 0.
\end{aligned}
\end{equation}
Assembling all of the pieces, we can now compute the score associated with the hyperletter $T(x=4)$ which is given by
\begin{equation}
\begin{aligned}
& I(\mu : T(x=4))=H(\mu)-H(\mu |T(x=4)) \\
& \quad \approx 0.722.
\end{aligned}
\end{equation}
As shown in the bar graph in Figure~\ref{fig:box_mi}, this hyperletter contains the maximum mutual information.  Hence, $T(x=4)$ is the maximally informative hyperletter.  This raises the
question of how our score depends upon number of hyperletters in our information blueprint.  To see how to address that question, we turn to filters for our toy promoter centered on more than one basepair.

\subsection{Two-site filter}

In our second toy model, we enlarge the space of filters to include more than one site. Specifically, our goal is to compute the mutual information between expression level and the word ${\bf T}$ given by the combination of the hyperletters $T(x=1)$ and $T(x=4)$. The mutual information between this word and gene expression is given by
\begin{equation}
I(\mu : \boldsymbol{T})=\underbrace{H(\mu)}_{\text {initial uncertainty }}-\underbrace{H(\mu |\boldsymbol{T})}_{\text {remaining uncertainty }}.
\end{equation}
We have already calculated the initial uncertainty $H(\mu)$. What is left for us to calculate is the conditional entropy representing the remaining uncertainty, $H(\mu |\boldsymbol{T})$.

We calculate this conditional entropy by exploring all the possible values the $T(x=1)$ and $T(x=4)$ hyperletters can adopt. More precisely, we must evaluate the sum
\begin{equation}\label{eq:CondProb2Site}
\begin{aligned}
 H(\mu |\boldsymbol{T}) = &H(\mu |(T(x=1), T(x=4))= \\
 = &P((T(x=1)=0, T(x=4)=0)  \,  H(\mu |(T(x=1)=0, T(x=4)=0))+ \\
& P((T(x=1)=1, T(x=4)=0) \, H(\mu |(T(x=1)=1, T(x=4)=0))+ \\
& P((T(x=1)=0, T(x=4)=1) \,  H(\mu |(T(x=1)=0, T(x=4)=1))+ \\
& P((T(x=1)=1, T(x=4)=1) \, H(\mu | (T(x=1)=1, T(x=4)=1)).
\end{aligned}
\end{equation}
The table in the Box in the main text tells us the values of the different probabilities in equation~\ref{eq:CondProb2Site}. Specifically, we have
\begin{equation}
P((T(x=1)=0, T(x=4)=0) = 3/5,
\end{equation}
\begin{equation}
P((T(x=1)=1, T(x=4)=0) = 1/5,
\end{equation}
\begin{equation}
P((T(x=1)=0, T(x=4)=1) = 1/5,
\end{equation}
and
\begin{equation}
P((T(x=1)=1, T(x=4)=1) = 0,
\end{equation}
each of which can be read off by counting how many of the table entries correspond to each of those conditions.
As a result, equation~\ref{eq:CondProb2Site} can be rewritten as
\begin{equation}
\begin{aligned}
 H(\mu |\boldsymbol{T}) = &H(\mu |(T(x=1), T(x=4)) \\
 = &3/5  \,  H(\mu |(T(x=1)=0, T(x=4)=0))+ \\
& 1/5 \, H(\mu |(T(x=1)=1, T(x=4)=0))+ \\
& 1/5 \,  H(\mu |(T(x=1)=0, T(x=4)=1)).
\end{aligned}
\end{equation}

To make progress, we now calculate each conditional entropy. We begin with
\begin{equation}
\begin{aligned}
& H(\mu | T(x=1)=0, T(x=4)=0)) \\
& =P(\mu=h | T(x=1)=0, T(x=4)=0)) \, \log_2 \left(P(\mu=h | T(x=1)=0, T(x=4)=0)) \right) \\
& +P(\mu=l | T(x=1)=0, T(x=4)=0)) \, \log_2 \left( P(\mu=l | T(x=1)=0, T(x=4)=0)) \right) \\
& = 1 \cdot \log_2 1 + 0 \cdot \log_2 0\\
& =0.
\end{aligned}
\end{equation}
The next term is
\begin{equation}
\begin{aligned}
& H(\mu |T(x=1)=1, T(x=4)=0)) \\
& =P(\mu=h | T(x=1)=1, T(x=4)=0)) \, \log_2 \left(P(\mu=h | T(x=1)=1, T(x=4)=0)) \right) \\
& +P(\mu=l | T(x=1)=1, T(x=4)=0)) \, \log_2 \left( P(\mu=l | T(x=1)=1, T(x=4)=0)) \right) \\
& = 1 \cdot \log_2 1 + 0 \cdot \log_2 0\\
& =0.
\end{aligned}
\end{equation}
The third and final term is
\begin{equation}
\begin{aligned}
& H(\mu |T(x=1)=0, T(x=4)=1)) \\
& =P(\mu=h | T(x=1)=0, T(x=4)=1)) \, \log_2 \left(P(\mu=h | T(x=1)=0, T(x=4)=1)) \right) \\
& +P(\mu=l | T(x=1)=0, T(x=4)=1)) \, \log_2 \left( P(\mu=l | T(x=1)=0, T(x=4)=1)) \right) \\
& = 0 \cdot \log_2 0 + 1 \cdot \log_2 1\\
& =0.
\end{aligned}
\end{equation}

Putting this all together, we arrive at
\begin{equation}
 H(\mu |\boldsymbol{T}) = H(\mu |(T(x=1), T(x=4)) = 0.
\end{equation}
Such that the mutual information between the word $\mathbf{T}$ containing the $T(x=1)$ and $T(x=4)$ hyperletters is
\begin{equation}
I(\mu : \boldsymbol{T})=H(\mu)-H(\mu |\boldsymbol{T}) = H(\mu) \approx 0.722 = I(\mu : T(x=4)).
\end{equation}
Our calculation reveals that the mutual information between gene expression and the hyperword containing the $T(x=1)$ and $T(x=4)$ hyperletters is the same as the mutual information between gene expression and the $T(x=4)$ hyperletter alone. In other words, we find that the hyperletter $T(x=4)$ already tells us everything there is to learn about the connection betweeen sequence and expression
for this toy promoter. As a result, adding the $T(x=1)$ hyperletter does not increase our information in any way. 
This is reflected in the plot in Figure~\ref{fig:box_mi}. We see that having a word with a length of one hyperletter---in this case that hyperletter is $T(x=4)$---already saturates the largest possible value of mutual information between gene expression and \emph{any} word.
Since the conditional entropy $H(\mu | T(x=4))$ is already zero, the identity of the additional hyperletters at each subsequent word length is immaterial: no combination can further reduce an uncertainty that has already vanished. Making the word longer by adding any of the other available hyperletters does not improve our overall informational standing.

\newpage
\section{More details on the results for the synthetic datasets}\label{sec:detail_synthetic}

To benchmark the informational blueprint method in a controlled setting where the ground truth is known exactly, we construct synthetic MPRA datasets using sequence-specific thermodynamic models of gene regulation. The approach follows Pan et al.~\cite{Pan2024}, who developed a computational pipeline that maps promoter sequences to expression levels via states-and-weights diagrams derived from experimentally measured energy matrices. Starting from a wild-type promoter, a library of mutant sequences is generated by randomly introducing point mutations at a specified rate. For each mutant, the probability of RNAP being bound---and hence the expression level---is computed from the thermodynamic model, yielding a synthetic dataset in the same format as an RNA-Seq--based MPRA. This procedure can be applied to any regulatory architecture (constitutive, simple repression, activation, double repression, etc.) by specifying the relevant transcription factors, their copy numbers, binding energies, and the logic of their interactions. Crucially, because every parameter is known by construction, the synthetic datasets provide an unambiguous test of whether the blueprint method correctly identifies binding site locations, signs, and counts. Below we present the blueprint results for each architecture in turn.

\subsection{Constitutive promoter}\label{app:constitutive}
First we consider a constitutive promoter regulated only by RNAP.
The optimal compression filter automatically identifies the RNAP binding site: it has strong coupling exclusively at the known binding domains with the correct bipartite structure~\cite{ptashne_genes_2002}, and at least an order of magnitude smaller couplings everywhere else (Fig.~\ref{fig:filter_constitutive_promoter}). Adding a second filter does not increase the retained mutual information, correctly indicating that there is only one functional regulatory element. Notably, the two spatially separated DNA-binding domains of RNAP (the $-35$ and $-10$ boxes) are not split across filters but captured together, reflecting their cooperative function as a single transcription factor.

\begin{figure}[ht!]
    \centering
    \includegraphics[width=0.6\columnwidth]{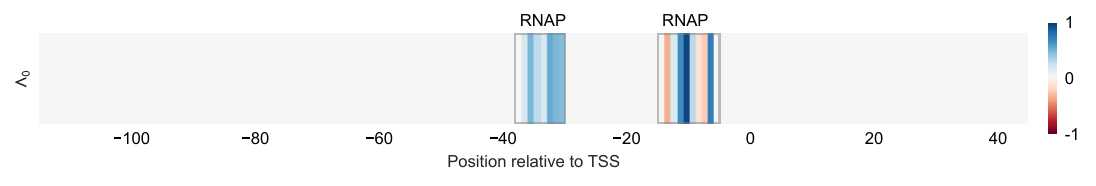}
    \caption{\textbf{The linear optimal compression filter for a constitutive promoter architecture.} The entries of the filter  $\Lambda^0$ are colour-coded according to their magnitude. The filter has strong coupling (and mostly with the same sign) at the RNAP binding sites, and negligible couplings everywhere else.}
    \label{fig:filter_constitutive_promoter}
\end{figure}

The localisation of the filter on the RNAP binding site (Fig.~\ref{fig:filter_constitutive_promoter}) follows from the general decomposition in Section~\ref{sec:covariance}. For a constitutive promoter, expression is a monotonic function of the RNAP binding energy alone, $\mu=\mu(E_P)$, where $E_P(s)=\sum_{i\in P}\varepsilon_{i,s_i}^{(P)}$ sums the position-specific energy contributions within the binding site~$P$. A mutation at position~$i\in P$ shifts the binding energy by an amount that depends on the identity of the mutant base; averaging over the three equally likely substitutions defines
\begin{equation}\label{eq:deltaE}
  \delta\bar{E}_i^{(P)} \;=\; \frac{1}{3}\sum_{s'\neq s_i^{\mathrm{wt}}} \bigl(\varepsilon_{i,s'}^{(P)}-\varepsilon_{i,s_i^{\mathrm{wt}}}^{(P)}\bigr).
\end{equation}
For positions outside the binding site, $B_i$ is uncorrelated with $E_P$ and hence with~$\mu$. Applying Eq.~\eqref{eq:cov_decomp} to this single-TF case gives
\begin{equation}\label{eq:cov_constitutive}
  C_i \;=\; \chi_P\;\delta\bar{E}_i^{(P)}\;\mathbf{1}_{i\in P},
\end{equation}
where $\chi_P=\langle\partial\mu/\partial E_P\rangle_{\mathrm{eff}}$ is the thermodynamic susceptibility. Since $\Lambda_i\propto C_i$ (Section~\ref{sec:covariance}), the optimal filter is $\Lambda_i\propto\delta\bar{E}_i^{(P)}\,\mathbf{1}_{i\in P}$: it is non-zero only within the binding site, and its relative magnitudes reflect how strongly a mutation at each position disrupts binding. The filter thus acts as a direct readout of the DNA--protein contact.

\subsection{Simple activation}\label{app:activation} 
We consider a promoter with a single activator, using parameters that emulate CRP activation at the \emph{lacZYA} promoter. Unlike the constitutive promoter, the information saturates at $n=2$. The learned filters (Fig.~\ref{fig:filter_activation}) couple to the CRP and RNAP binding sites, with the CRP site showing the expected bipartite structure of the homodimer~\cite{ptashne_genes_2002}. Because CRP and RNAP act cooperatively---CRP binding enhances RNAP recruitment---mutations in either site shift expression in the same direction, and the filter accordingly assigns weights of the same sign to both regions. This contrasts with simple repression (Fig.~\ref{fig:filter_repression}), where the RNAP and repressor weights have opposite signs.

\begin{figure}[ht!]
    \centering
    \includegraphics[width=0.6\columnwidth]{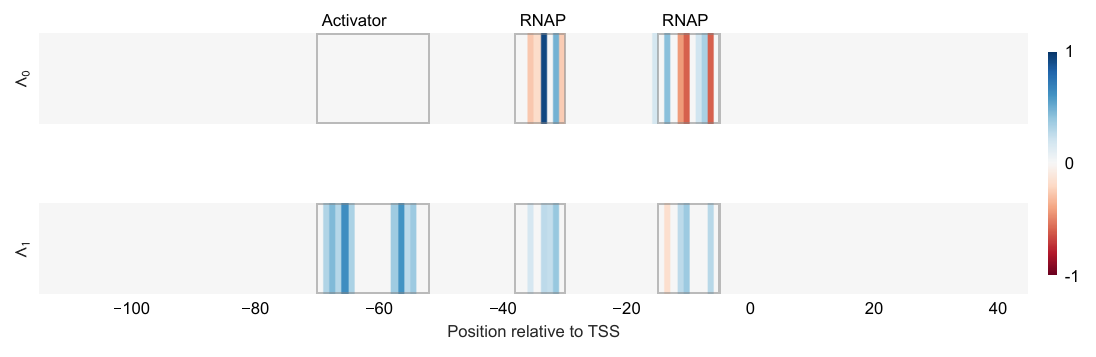}
    \caption{\textbf{Optimal compression for CRP activation.}
        The optimal linear compression map $\Lambda=[\Lambda^0, \Lambda^1]$ filters out the correct binding regions for a simple activation architecture. While the filter $\Lambda^0$ captures the binding sites of RNAP, $\Lambda^1$ couples to the activator and the RNAP binding sites. Both filters have vanishing couplings everywhere else.}
    \label{fig:filter_activation}
\end{figure}

\subsection{Simple repression}\label{app:repression}
Now we look at the case of the simple repression architecture. The synthetic dataset uses the known copy number and binding energy of the lacl repressor.

\begin{figure}[ht!]
    \centering
    \includegraphics[width=0.6\columnwidth]{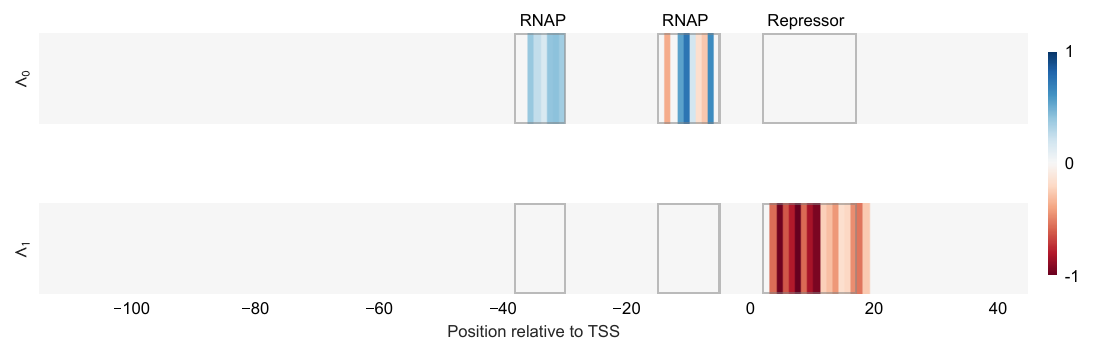}
    \caption{\textbf{Optimal compression for simple repression.}
        The optimal linear compression map $\Lambda=[\Lambda^0, \Lambda^1]$ filters out the correct binding regions for a simple repression architecture. Filters $\Lambda^0$ and $\Lambda^1$ respectively capture the binding sites of RNAP and the repressor. Both filters have vanishing couplings everywhere else.}
    \label{fig:filter_repression}
\end{figure}

In the case of single filter, there is strong coupling to both RNAP with the correct bipartite structure~\cite{ptashne_genes_2002} and repressor binding sites, but with an opposite sign. This indicates that mutations in these two regions lead to opposite expression shifts.
Adding the second filter increases retained information. In other words, compressing the mutations on repression and RNAP binding sites into two separate binary symbols is more informative about the expression shifts than to combine them into a single binary variable. The compressed information saturates at two filters, again matching the number of distinct TFs in the system.

\subsubsection{Effect of expression noise and library size}
\begin{figure*}[hbt!]
    \centering
    \includegraphics[width=0.9\textwidth]{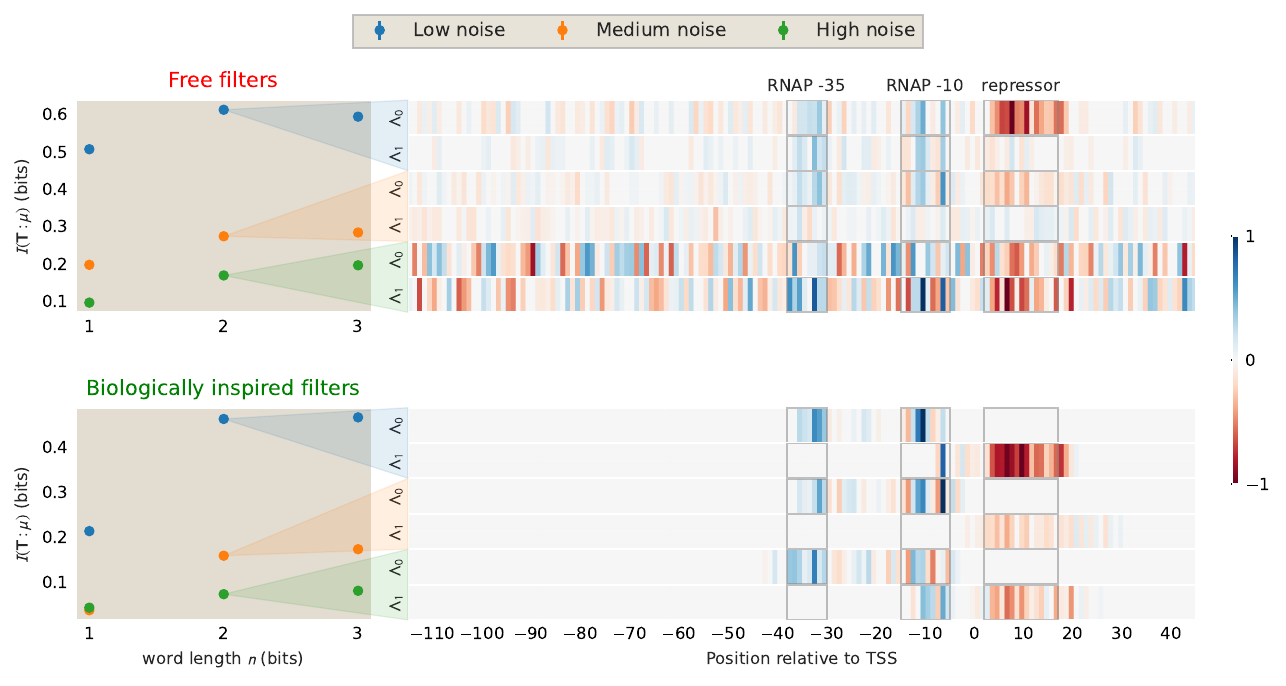}
    \caption{%
        \textbf{Robustness of binding site detection to expression noise.}
        Synthetic simple-repression libraries were generated with low, medium, and high levels of lognormal noise added to the RNA/DNA ratio.
        \textit{Left panels:} captured mutual information $I(\mathbf{T};\mu)$ as a function of word length~$n$ for free (top) and biologically inspired (bottom) filter parameterisations;
        shaded bands indicate the null floor estimated from permuted expression labels.
        \textit{Right panels:} learned two-component filters ($n=2$) at each noise level, sign-calibrated so that activating elements (RNAP binding sites) appear positive (red) and repressing elements appear negative (blue);
        grey boxes mark the true binding site locations.
        The biologically inspired parameterisation maintains sharper, better-localised filters and higher mutual information at elevated noise, extending the regime in which individual binding sites can be resolved.
}
    \label{fig:noise_benchmark}
\end{figure*}

In real experiments the RNA/DNA ratio measurements are noisy, and the signal of individual binding sites can be on the order of millibits, making it difficult to resolve binding sites with unconstrained filters.
To assess robustness of our method, we generate noisy synthetic libraries that mimic the measurement variability of empirical MPRAs. Each promoter variant is expanded into a random number of barcode replicates (drawn from a negative binomial distribution), and log-normal noise of increasing magnitude $\sigma$ is applied to the RNA/DNA expression ratio to represent cell-to-cell and replicate-to-replicate variability.

Figure~\ref{fig:noise_benchmark} summarises the results. As noise grows, the captured mutual information decreases and the filters become increasingly diffuse. In the free parameterisation (top row), the signal-to-noise ratio degrades rapidly: at high noise the information plateau is no longer visible and the filters fail to localise on the binding sites. The biologically informed envelope parameterisation (Section~\ref{sec:bio_prior}) extends the resolvable regime considerably (bottom row): by constraining each filter to a physically plausible width and position, the number of free parameters is drastically reduced, allowing the optimisation to distinguish genuine binding-site signals from noise even when unconstrained filters fail. This highlights the importance of incorporating domain-specific priors to regularize the optimization landscape and enhance robustness against experimental noise, a critical consideration for real-world applications of the blueprint method.

\begin{figure}[hbt!]
    \centering
    \includegraphics[width=0.4\linewidth]{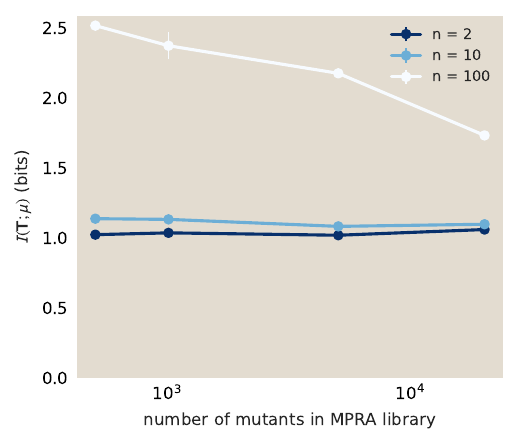}
    \caption{The capture mutual information $I(\bm{T};\mu)$ as a function of the number of mutants $N$ in a simulated simple-repression library, at three word lengths. 
    At $n=2$ (the correct plateau) and $n=10$ (just past saturation), $I(\bm{T};\mu)$ is essentially independent of $N$ down to $N \sim 500$. At $n=100$ (heavily over-parameterised), the InfoNCE bound is positively biased at small $N$ and decreases monotonically as the library grows.}
    \label{fig:libsize}
\end{figure}

A complementary practical concern is the size of the MPRA library itself. To probe this, we drew nested random subsets of size $N$ from a large synthetic simple-repression dataset and re-ran the word-length scan at each $N$. 
As shown in Figure~\ref{fig:libsize}, while the correct plateau ($n=2$), $I(\bm{T};\mu)$ is essentially present even for very small MPRA libraries down to a few hundred mutants, finite-sample artefacts become significant in the largely over-parameterised regime when the number of hyperletters become on the order of the length of the promoter sequence itself ($n=100 \gg n^{\ast}$).
The InfoNCE bound is positively biased at small $N$ and decays monotonically as $N$ grows. 
Library size therefore matters most when $n$ is pushed far above the true number of binding sites.

\subsubsection{Overlapping binding sites}
\begin{figure}[hbt!]
    \centering
    \includegraphics[width=0.5\columnwidth]{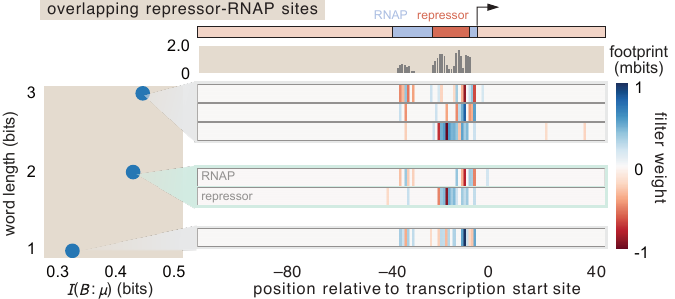}
    \caption{\textbf{Delineating overlapping binding sites.}
        A synthetic promoter with RNAP and repressor binding sites that overlap. At $n=1$, the compression bottleneck is too tight, forcing RNAP (blue) and Repressor (red) into a single conflated filter. Increasing to $n=2$ allows resolving them into two distinct components ($\Lambda_1, \Lambda_2$).
        }
    \label{fig:overlap}
\end{figure}

A common variant of simple repression is when the repressor binding site overlaps with RNAP, since a primary mechanism of repression is steric occlusion. This poses a significant challenge for information footprints~\cite{Pan2024}: when the overlap exceeds roughly half of the binding site length, the opposing signals from RNAP and repressor cancel, making it difficult to identify either site. The blueprint overcomes this because the two-component compression ($n=2$) can assign RNAP and repressor to distinct filters even when their binding sites share common positions. Figure~\ref{fig:overlap} illustrates this for a synthetic promoter with overlapping RNAP and repressor sites.

\begin{figure}[hbt!]
    \centering
    \includegraphics[width=0.85\textwidth]{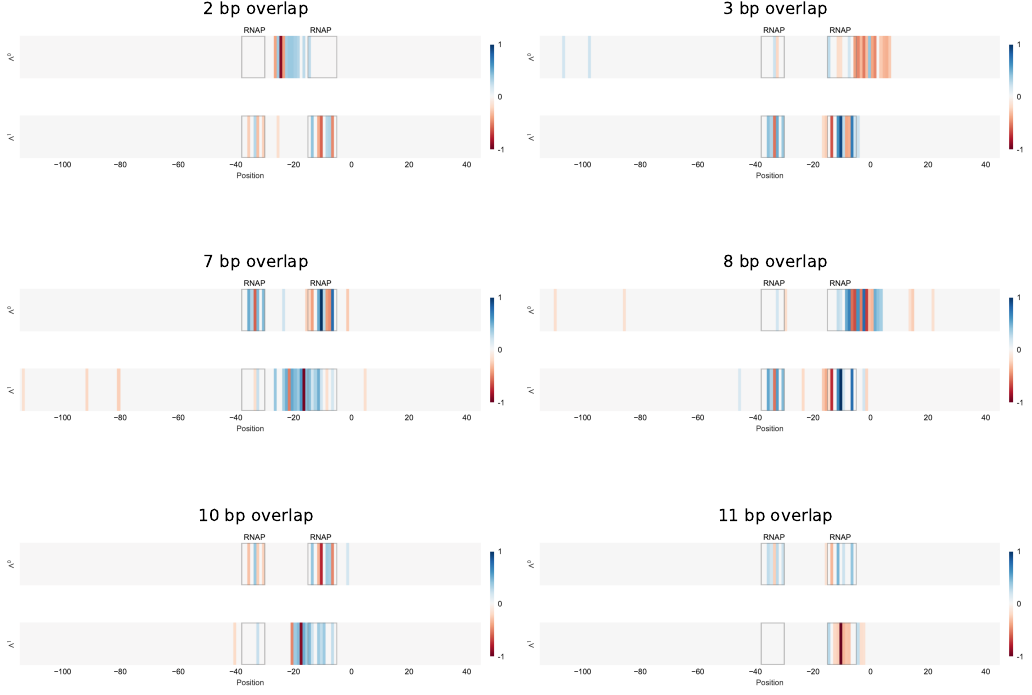}
    \caption{\textbf{Systematic survey of overlapping RNAP and repressor binding sites using synthetic data.}
        Six promoter architectures with progressively increasing overlap (2--11\,bp) between the RNAP and repressor binding sites. In each panel, the two-component blueprint ($n=2$) resolves the overlapping regulatory elements into distinct filters, even when the sites share a substantial fraction of their positions.}
    \label{fig:overlap_grid}
\end{figure}

Figure~\ref{fig:overlap_grid} presents six additional examples where the repressor binding site overlaps with the RNAP binding site to varying widths (from 2 to 11\,bp), demonstrating that the blueprint manages to resolve the binding sites across a range of overlap geometries.

\subsection{Double repression, logic gates, and DNA looping} 
We now consider how the possible interactions between multiple TFs, such as cooperative binding, logic-gate architecture, and DNA looping, can be captured by the optimal compression scheme.

\begin{figure}[ht!]
    \centering
    \includegraphics[width=0.9\columnwidth]{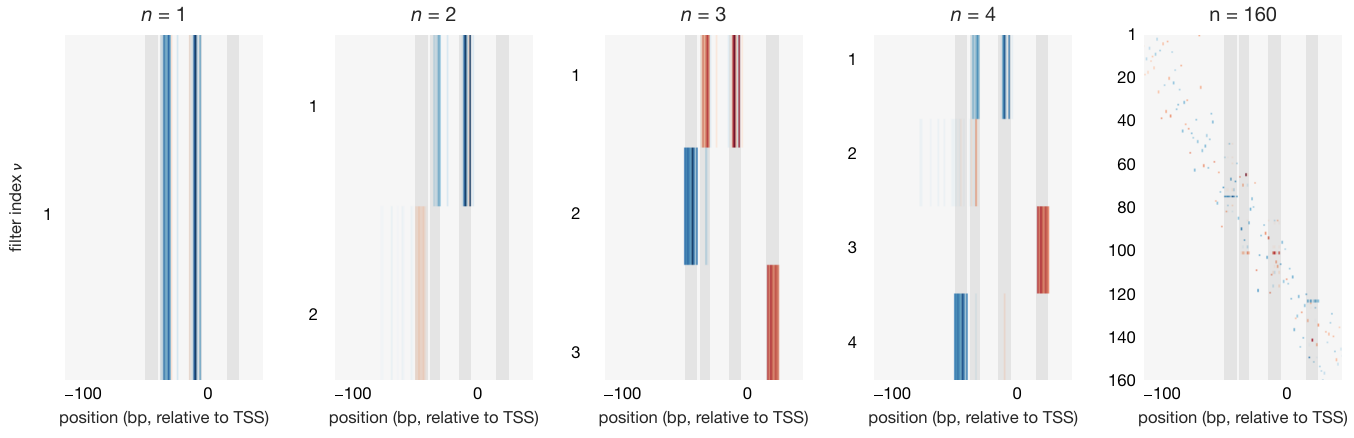}
    \caption{\textbf{Informational blueprint filters for the double repression architecture for various number of hyperletters.} 
    The three binding sites for RNAP and the two repressors are captured in separate filters when the number of hyperletters is at least three.
    When the number of hyperletters is increased way above the number of binding sites, here shown for $n=160$, the filters start to resolve the individual base pairs which can help capturing microscopic correlations with the further decimal points of expression.
    }
    \label{fig:double_repression}
\end{figure}

\subsubsection{Logic gates determining cooperation between transcription factors}
Consider a double repression architecture, where two repressors R$_1$ and R$_2$ regulate the same promoter. The regulatory logic---i.e.\ whether both repressors must bind to silence the gene (\texttt{AND} logic) or either one suffices (\texttt{OR} logic)---leaves a distinct structural signature in the blueprint filters.
Figure~\ref{fig:double_repression} shows the \texttt{OR}-logic case, where the two repressors act independently. When $n=1$, both repressor sites are conflated into a single filter; stepping to $n=2$ cleanly separates them into distinct filters, each localising on its respective binding site.

\begin{figure}[hbt!]
    \centering
    \includegraphics[width=0.4\columnwidth]{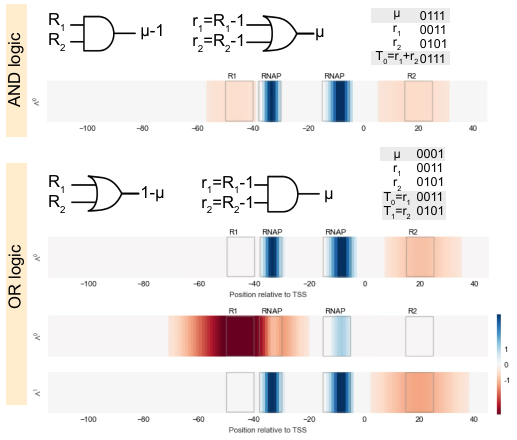}
    \caption{\textbf{Reading off regulatory logic from filter structure in synthetic data.}
        Two repressors R$_1$ and R$_2$ can combine via different logic gates. Because filters couple to mutations, we define $r_i$: the variable $r_i = 1$ when repressor $i$ \emph{cannot} bind. Top (\texttt{AND} logic): Repression requires both repressors; expression is high if \emph{either} site is disrupted ($r_1$ \texttt{OR} $r_2$). A single filter computes $T_0 = r_1 + r_2$, which distinguishes all functionally distinct states (truth table, right). The filter couples to both repressor sites with the same sign. Bottom (\texttt{OR} logic): Either repressor alone suffices; expression is high only if \emph{both} sites are disrupted ($r_1$ \texttt{AND} $r_2$). Since a single linear filter cannot compute \texttt{AND}, we need two filters, each coupling to a distinct repressor site. The two-bit representation separates all functionally distinct configurations.
    }
    \label{fig:logic}
\end{figure}

This is a direct consequence of the cooperative logic. Under \texttt{AND} logic, repression requires both repressors to bind; disrupting \emph{either} site alone is sufficient to relieve repression. A single linear filter $T_0 = r_1 + r_2$ (where $r_i = 1$ when site $i$ is disrupted) therefore distinguishes all functionally distinct expression states. In contrast, under \texttt{OR} logic, either repressor alone suffices for repression, and expression is high only when \emph{both} sites are disrupted---a condition that a single linear filter cannot express. The blueprint responds by assigning each repressor its own filter, yielding a two-bit representation. Figure~\ref{fig:logic} contrasts the two cases, including the truth tables that justify the minimal number of filters required.

An alternative way to distinguish cooperative from independent regulation is to look at the information footprint signal at each site as a function of TF copy number~\cite{Pan2024}.
Under \texttt{AND} logic, the signal at both sites vanishes simultaneously when one repressor is absent, whereas under \texttt{OR} logic each site retains its signal independently. The blueprint filters capture the distinction between cooperative and independent regulation structurally, without requiring titration experiments at multiple copy numbers.

\subsubsection{DNA looping}\label{app:dna_looping}
The double repression architecture also provides the setting for DNA looping, a fundamentally non-local regulatory mechanism. In the \emph{lac} operon of \emph{E.~coli}, a LacI tetramer simultaneously occupies two spatially separated operator sites (O$_1$ and O$_3$, or O$_1$ and O$_2$), forcing the intervening DNA to bend into a loop. This looped configuration sterically occludes RNAP, providing an additional repressive state beyond simple single-operator binding.

In the thermodynamic model, the looped state contributes a Boltzmann weight that depends on (i) the binding energies at both operator sites and (ii) the free energy cost of bending the DNA into a loop, $\Delta F_{\rm loop}$. Crucially, the two operators do not function independently: disrupting \emph{either} operator site abolishes looping entirely, so a mutation at one site affects the regulatory contribution of the distant site. This strong mechanical cooperativity means that, from the standpoint of mutual information, both operators behave as a single non-local regulatory unit.

Figure~\ref{fig:dna_looping} shows the three-component compression map for a synthetic looping architecture. Despite using $n=3$ filters, the optimisation does \emph{not} separate the two operator sites into distinct components. Instead, a single filter ($\Lambda_2$) couples to both distant operators simultaneously, while $\Lambda_1$ captures RNAP. The third filter remains essentially empty, confirming that only two functional regulatory elements are present---RNAP and the cooperative repressor dimer---even though the repressor occupies two spatially separated stretches of DNA. This contrasts sharply with the independent double repression case (Fig.~\ref{fig:double_repression}), where each repressor site is assigned its own filter. The number of non-trivial filters thus distinguishes cooperative from independent multi-site regulation.

\begin{figure}[ht!]
    \centering
    \includegraphics[width=0.9\columnwidth]{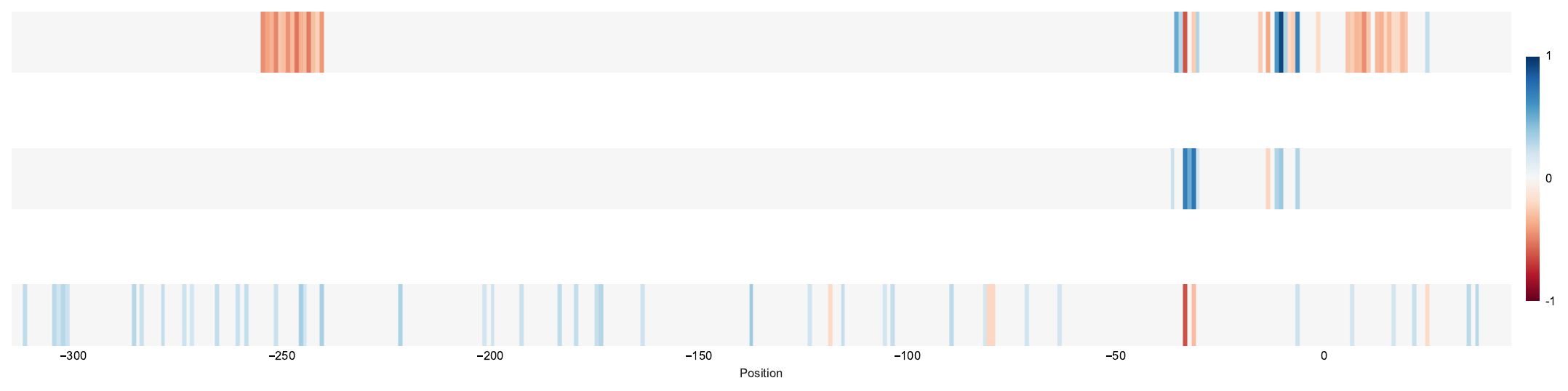}
    \caption{\textbf{Filter structure for DNA looping.} The three-component optimal compression map for a regulatory architecture where a LacI tetramer binds two distant operator sites, forcing the DNA to form a loop. A single filter ($\Lambda_2$) couples to both operator sites simultaneously, reflecting their cooperative function as one non-local regulatory unit. The third filter ($\Lambda_3$) is essentially empty, confirming that the system contains only two functional regulatory elements despite two spatially separated repressor binding sites.}
    \label{fig:dna_looping}
\end{figure}

The filter merging observed in Figure~\ref{fig:dna_looping} can be understood quantitatively by applying the decomposition of Section~\ref{sec:covariance} to the looping partition function. With RNAP and a LacI tetramer that can occupy two operator sites O$_1$ and O$_2$, the relevant promoter states are: empty, RNAP bound, repressor at O$_1$ alone, repressor at O$_2$ alone, and the looped state where the repressor simultaneously occupies both operators. The partition function reads
\begin{equation}\label{eq:Z_loop}
  Z = 1 + \frac{p}{N_{\mathrm{NS}}}\,e^{-\beta E_P} + \frac{r}{N_{\mathrm{NS}}}\,e^{-\beta E_{R_1}} + \frac{r}{N_{\mathrm{NS}}}\,e^{-\beta E_{R_2}} + \frac{r^2}{N_{\mathrm{NS}}^2}\,e^{-\beta(E_{R_1}+E_{R_2}+\Delta F_{\mathrm{loop}})},
\end{equation}
where $E_{R_1}$ and $E_{R_2}$ are the binding energies at the two operator sites and $\Delta F_{\mathrm{loop}}<0$ is the free energy gain from loop formation. Expression is proportional to the RNAP occupancy, $\mu\propto p_{\mathrm{on}} =(p/N_{\mathrm{NS}})\,e^{-\beta E_P}/Z$.
 
Since each binding energy $E_a$ ($a\in\{P,R_1,R_2\}$) depends only on mutations within its own site, Eq.~\eqref{eq:cov_decomp} gives
\begin{equation}\label{eq:cov_loop}
  C_i \;=\; \chi_P\;\delta\bar{E}_i^{(P)}\;\mathbf{1}_{i\in P} \;+\; \chi_{R_1}\;\delta\bar{E}_i^{(R_1)}\;\mathbf{1}_{i\in\mathrm{O}_1} \;+\; \chi_{R_2}\;\delta\bar{E}_i^{(R_2)}\;\mathbf{1}_{i\in\mathrm{O}_2},
\end{equation}
where $\chi_a=\langle\partial\mu/\partial E_a\rangle_{\mathrm{eff}}$ are thermodynamic susceptibilities and $\delta\bar{E}_i^{(a)}$ is the mutation-averaged energy shift (Eq.~\ref{eq:deltaE}). The structure of these susceptibilities depends critically on $\Delta F_{\mathrm{loop}}$.
 
When the looping energy is negligible ($\Delta F_{\mathrm{loop}}\to 0$), the looped state is suppressed and the two operators act as independent repressors. In this limit $\chi_{R_1}$ and $\chi_{R_2}$ are determined by distinct terms in the partition function and are only weakly correlated across the mutant ensemble. The covariance vector $C_i$ then has two independent components supported on O$_1$ and O$_2$, and the IB optimisation assigns each operator its own filter---precisely the independent double-repression scenario of Figure~\ref{fig:double_repression}.
 
When looping is strongly favourable ($|\Delta F_{\mathrm{loop}}|\gg k_BT$), the looped Boltzmann weight in Eq.~\eqref{eq:Z_loop} dominates the repressive states. Since this weight depends on $E_{R_1}+E_{R_2}$, the susceptibilities become locked:
\begin{equation}\label{eq:chi_lock}
  \chi_{R_1} \;\approx\; \chi_{R_2} \;\approx\; -\Bigl\langle\frac{\partial p_{\mathrm{on}}}{\partial(E_{R_1}+E_{R_2})}\Bigr\rangle_{\!\mathrm{eff}}.
\end{equation}
Equation~\eqref{eq:cov_loop} then collapses to two independent directions: one along the RNAP site and one along $\delta\bar{E}_i^{(R_1)}\,\mathbf{1}_{i\in\mathrm{O}_1} +\delta\bar{E}_i^{(R_2)}\,\mathbf{1}_{i\in\mathrm{O}_2}$, spanning both operators simultaneously. Since $\Lambda_i\propto C_i$ (Section~\ref{sec:covariance}), the optimal filter aligns with this merged direction, explaining why $\Lambda_2$ in Figure~\ref{fig:dna_looping} couples to both distant operator sites as a single non-local regulatory unit.
 
The degree of locking between $\chi_{R_1}$ and $\chi_{R_2}$ therefore provides a continuous interpolation between independent multi-site repression (separate filters) and cooperative looping (merged filter), controlled by a single physical parameter $\Delta F_{\mathrm{loop}}$.

\section{Proof of the variational representation of mutual information}
Let us prove the Donsker-Varadhan formula in Eq.~\ref{eq:DV}. We will first prove that the expression in the supremum is strictly a lower bound of mutual information, and then show that the supremum is attained by a suitable choice of $f(a,b)$ and the value of the supremum is equal to the mutual information.

\begin{lemma}[Lower bound of mutual information]
    For any function $f(a,b) \in \mathbb{R}$, the following inequality holds
    \begin{equation}\label{eq:DVlower_bound}
        I(A:B) \geq \mathbb{E}_{P(a,b)}\left[f(a,b)\right] - \log \mathbb{E}_{P(a)P(b)}\left[e^{f(a,b)}\right].
    \end{equation}
\end{lemma}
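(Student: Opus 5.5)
The plan is the standard change-of-measure argument: introduce a tilted (Gibbs) distribution built from $f$ and use nonnegativity of the Kullback--Leibler divergence. Write $Q(a,b)=P(a)P(b)$ and $Z=\mathbb{E}_{Q}[e^{f(a,b)}]$. If $Z=\infty$, the right-hand side of Eq.~\eqref{eq:DVlower_bound} is $-\infty$ and there is nothing to prove, so we assume $0<Z<\infty$. We then define the tilted distribution
\begin{equation*}
G(a,b)=\frac{Q(a,b)\,e^{f(a,b)}}{Z},
\end{equation*}
which is a normalized probability distribution, absolutely continuous with respect to $Q$.

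The key step is to split the mutual information, written as $D_{\rm KL}[P(a,b)\,\|\,Q]$, by inserting $G$ inside the logarithm:
\begin{equation*}
I(A:B)=\mathbb{E}_{P(a,b)}\left[\log\frac{P(a,b)}{G(a,b)}\right]+\mathbb{E}_{P(a,b)}\left[\log\frac{G(a,b)}{Q(a,b)}\right].
\end{equation*}
The first term is $D_{\rm KL}[P(a,b)\,\|\,G]\ge 0$ by Gibbs' inequality, which is Jensen's inequality applied to the concave logarithm. The second term equals $\mathbb{E}_{P(a,b)}[f(a,b)]-\log Z$ by the definition of $G$, and this is exactly the right-hand side of Eq.~\eqref{eq:DVlower_bound}. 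Dropping the nonnegative KL term gives the inequality. The logarithm base only rescales both sides consistently, provided $e^{f}$ is read as $2^{f}$ in base-2 units, or natural logarithms are used throughout.

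The main technical obstacle is measure-theoretic bookkeeping. If $P(a,b)$ is not absolutely continuous with respect to $Q$, then $I=\infty$ and the bound is trivial. Otherwise the split is valid whenever $\mathbb{E}_P[f]$ is well defined. If $\mathbb{E}_P[f]=+\infty$ while $I<\infty$, we would need an extra argument: one first truncates $f$ to $\min(f,K)$, proves the bound for bounded $f$, and then passes to the limit $K\to\infty$ using monotone convergence on both terms. I would state the lemma for bounded or integrable $f$ and handle the general case through this truncation.

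The same decomposition makes equality visible: the KL term vanishes when $G=P$, that is, when $f=\log\frac{P(a,b)}{P(a)P(b)}+\text{const}$. This gives the attainment part of Eq.~\eqref{eq:DV} that follows.
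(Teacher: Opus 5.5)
Your proof is correct and is essentially the paper's argument: both introduce the tilted distribution $P(a)P(b)\,e^{f(a,b)}/\mathbb{E}_{P(a)P(b)}[e^{f(a,b)}]$ and identify the gap between $I(A:B)$ and the right-hand side as the Kullback--Leibler divergence of $P(a,b)$ from this tilted distribution, which is nonnegative by Jensen's inequality. Your remarks on integrability and on the logarithm base are extra care that the paper's implicitly discrete treatment omits; in particular, the bound as written needs natural logarithms, or $2^{f}$ in place of $e^{f}$ when working in bits.
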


\begin{proof}
    We need to show that
    \begin{align}\label{eq:DVlower_bound_proof1}
        0 \leq& I(A:B) -  \mathbb{E}_{P(a,b)}\left[f(a,b)\right] + \log \mathbb{E}_{P(a)P(b)}\left[e^{f(a,b)}\right]\nonumber\\
        &=\mathbb{E}_{P(a,b)}\left[\log \frac{P(a,b)}{P(a)P(b)} - f(a,b)\right] + \log \mathbb{E}_{P(a)P(b)}\left[e^{f(a,b)}\right]\nonumber\\
        &=\mathbb{E}_{P(a,b)}\left[\log \frac{P(a,b)}{P(a)P(b) e^{f(a,b)}} \right] + \log \mathbb{E}_{P(a)P(b)}\left[e^{f(a,b)}\right]\nonumber\\
        &=\mathbb{E}_{P(a,b)}\left[\log \frac{P(a,b)}{P(a)P(b) e^{f(a,b)}}  + \log \mathbb{E}_{P(a)P(b)}\left[e^{f(a,b)}\right]\right]\nonumber\\
        &=\mathbb{E}_{P(a,b)}\left[\log \frac{P(a,b)\mathbb{E}_{P(a)P(b)}\left[e^{f(a,b)}\right]}{P(a)P(b) e^{f(a,b)}} \right].
    \end{align}
    Let us now define a new probability distribution:
    \begin{equation}
        Q(a,b) := \frac{P(a)P(b) e^{f(a,b)}}{\mathbb{E}_{P(a)P(b)}\left[e^{f(a,b)}\right]}.
    \end{equation}
    After substituting this expression in \ref{eq:DVlower_bound_proof1}, what is left is to prove that
    \begin{equation}
        0 \leq \mathbb{E}_{P(a,b)}\left[\log \frac{P(a,b)}{Q(a,b)} \right].
    \end{equation}

    We can show this using Jensen's inequality:
    \begin{equation}\label{eq:jensen}
        \varphi \left(\mathbb{E}_{P(x)}[f(x)]\right) \geq \mathbb{E}_{P(x)}\left[\varphi(f(x)),\right]
    \end{equation}
    which holds for any $f(x)$ and any concave function $\varphi$. Because $\log$ is a strictly concave function, we have
    \begin{align*}
        \mathbb{E}_{P(a,b)}\left[\log \frac{P(a,b)}{Q(a,b)} \right] =& - \mathbb{E}_{P(a,b)}\left[\log \frac{Q(a,b)}{P(a,b)} \right]\\
        \geq& -\log \mathbb{E}_{P(a,b)}\left[\frac{Q(a,b)}{P(a,b)} \right]\\
        &= \log 1 = 0.
    \end{align*}
\end{proof}

\begin{lemma}[Existence of supremum in DV representation]
    There exists a function $f^*(a,b) \in \mathbb{R}$, such that 
    \begin{equation}
        I(A:B) = \mathbb{E}_{P(a,b)}\left[f^*(a,b)\right] - \log \mathbb{E}_{P(a)P(b)}\left[e^{f^*(a,b)}\right].
    \end{equation}
\end{lemma}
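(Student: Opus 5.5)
The plan is to exhibit the optimal critic explicitly: take the pointwise log-density ratio
\begin{equation*}
    f^*(a,b) := \log \frac{P(a,b)}{P(a)P(b)},
\end{equation*}
possibly shifted by an arbitrary constant, since the objective is invariant under $f\to f+c$. We then compute both terms of the right-hand side directly. This is the natural candidate suggested by the proof of the preceding Lemma. There the gap between $I(A:B)$ and the bound was shown to equal $\mathbb{E}_{P(a,b)}[\log P(a,b)/Q(a,b)]$, a KL divergence $D_{\rm KL}[P\|Q]$ that vanishes exactly when $Q(a,b)=P(a,b)$. Solving $Q=P$ for $f$ gives $e^{f(a,b)} \propto P(a,b)/(P(a)P(b))$, which is precisely $f^*$ up to the additive constant.

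The key steps, in order, are as follows. First, the first term is $\mathbb{E}_{P(a,b)}[f^*(a,b)] = \mathbb{E}_{P(a,b)}[\log P(a,b)/(P(a)P(b))]$, which is $I(A:B)$ by the definition of mutual information as a KL divergence given above. Second, for the log-partition term we compute
\begin{equation*}
    \mathbb{E}_{P(a)P(b)}[e^{f^*(a,b)}] = \sum_{a,b} P(a)P(b)\,\frac{P(a,b)}{P(a)P(b)} = \sum_{a,b}P(a,b) = 1,
\end{equation*}
with sums replaced by integrals in the continuous case. Its logarithm is therefore $0$. Third, subtracting gives exactly $I(A:B)$. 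Combined with the preceding Lemma, which says every $f$ yields a value $\le I(A:B)$, this shows that the supremum in Eq.~\eqref{eq:DV} is attained by $f^*$ and equals $I(A:B)$.

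The main obstacle is ensuring that $f^*$ is a legitimate real-valued function. On pairs where $P(a,b)=0$ but $P(a)P(b)>0$, the ratio gives $f^*=-\infty$. We handle this in two ways. First, these pairs carry zero mass under $P(a,b)$, so they do not contribute to the first expectation (we adopt the convention $0\cdot\log 0=0$). Second, they contribute $e^{-\infty}=0$ to the second expectation, so the computation above is unaffected. If a strictly finite $f$ is insisted on, we would instead take $f_\epsilon=\log\bigl((P(a,b)+\epsilon)/(P(a)P(b))\bigr)$ and let $\epsilon\to0$. In that case the supremum is approached rather than attained, which suffices for Eq.~\eqref{eq:DV}.

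The converse situation, $P(a,b)>0$ where $P(a)P(b)=0$, cannot arise, because $P(a,b)\le P(a)$ and $P(a,b)\le P(b)$. In the continuous setting we would additionally assume that $P(a,b)$ is absolutely continuous with respect to $P(a)P(b)$ and that $I(A:B)<\infty$, so that the Radon--Nikodym derivative defining $f^*$ exists and is integrable.
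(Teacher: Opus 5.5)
Your proof is correct and is essentially the paper's own argument: the paper also takes $f^*=\log\frac{P(a,b)}{P(a)P(b)}$, identifies the first term with $I(A:B)$, and kills the log-partition term by normalisation of $P(a,b)$. Your handling of pairs with $P(a,b)=0<P(a)P(b)$ is a refinement the paper omits. In that case no real-valued $f$ attains the supremum, since the gap $D_{\rm KL}[P\|Q]$ is then strictly positive, so the lemma's $f^*\in\mathbb{R}$ must be read in the extended reals. Note that your $\epsilon$-regularisation only works on finite alphabets; beyond them, use $\log\bigl(\frac{P(a,b)}{P(a)P(b)}+\epsilon\bigr)$, whose log-partition term is exactly $\log(1+\epsilon)$.
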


\begin{proof}
    Let us choose $f^*(a,b) = \log \frac{P(a,b)}{P(a)P(b)}$. 
    Then we have
    \begin{align*}
        \mathbb{E}_{P(a,b)}\left[f^*(a,b)\right] - \log \mathbb{E}_{P(a)P(b)}\left[e^{f^*(a,b)}\right] &= \mathbb{E}_{P(a,b)}\left[\log \frac{P(a,b)}{P(a)P(b)}\right] - \log \mathbb{E}_{P(a)P(b)}\left[e^{\log \frac{P(a,b)}{P(a)P(b)}}\right]\\
        &= I(A:B) - \log\mathbb{E}_{P(a)P(b)}\left[\frac{P(a,b)}{P(a)P(b)}\right]\\
        &= I(A:B) - \sum_{a\in \mathcal{A}}\sum_{ b \in \mathcal{B}} P(a)P(b) \frac{P(a,b)}{P(a)P(b)}\\
        &= I(A:B) - \log 1 = I(A:B),
    \end{align*}
    where the last line follows from the normalisation of the joint distribution $P(a,b)$.
\end{proof}

\end{document}